\documentclass[a4paper,UKenglish,cleveref, autoref, thm-restate]{lipics-v2021}
\usepackage{cite}
\usepackage{graphicx}
\usepackage{amsmath}

\usepackage{thmtools}
\usepackage{thm-restate}
\usepackage{hyperref}
\usepackage{cleveref}

\usepackage[createShortEnv]{proof-at-the-end}

\usepackage{xcolor}
\usepackage{tabularx}
\usepackage{verbatimbox}
\usepackage{xspace}

\usepackage{prettyref}
\usepackage{wrapfig}
\usepackage{cutwin}
\usepackage{mathtools}
\usepackage{adjustbox}
\usepackage{float}
\usepackage{longtable}
\usepackage{stackengine}
\usepackage{xfrac}

\newrefformat{sec}{Section\,\ref{#1}}
\newrefformat{def}{Def.\,\ref{#1}}
\newrefformat{thm}{Theorem\,\ref{#1}}
\newrefformat{prop}{Proposition\,\ref{#1}}
\newrefformat{lem}{Lemma\,\ref{#1}}
\newrefformat{cor}{Corollary\,\ref{#1}}
\newrefformat{ex}{Example\,\ref{#1}}
\newrefformat{tab}{Table\,\ref{#1}}
\newrefformat{fig}{Fig.\,\ref{#1}}
\newrefformat{app}{Appendix\,\ref{#1}}
\usepackage{stmaryrd}
\usepackage{amssymb}
\newcommand{\logic}{\text{\upshape\textsf{d{\kern-0.05em}G{\kern-0.15em}$\mathcal{L}_{3}$}}\xspace}
\usepackage{bussproofs}
\usepackage{xparse}
\newcommand{\dDL}[1][]
{\text{\upshape\textsf{d{\kern-0.1em}$\mathcal{L}$}}\xspace}
\newcommand{\C}{\complement}
\newcommand{\s}{\mathcal{S}}

\newcommand{\spl}{\Diamond\text{spl}}
\newcommand{\splb}{\Box\text{spl}}
\newcommand{\aspl}{\Diamond\text{aspl}}
\newcommand{\asplb}{\Box\text{aspl}}
\newcommand{\dco}{\Diamond^\co}
\newcommand{\dcob}{\Box^\co}
\newcommand{\dlor}{\Diamond\lor}
\newcommand{\bland}{\Box\land}
\newcommand{\db}{\Box \cdot}
\newcommand{\sdb}{\Box\{\cdot\}}
\newcommand{\sas}{\Diamond\{\cdot\}\lstrike := \rstrike}
\newcommand{\as}{\Diamond\lstrike := \rstrike}
\newcommand{\asb}{\Box\lstrike := \rstrike}
\newcommand{\scon}{\Diamond\{\cdot\}\lstrike' \rstrike}
\newcommand{\con}{\Diamond\lstrike' \rstrike}
\newcommand{\conb}{\Box\lstrike' \rstrike}
\newcommand{\sconn}{\Diamond\{\cdot\}\lstrike' \rstrike^\C}
\newcommand{\conn}{\Diamond\lstrike' \rstrike^\C}
\newcommand{\connb}{\Box\lstrike' \rstrike^\C}
\newcommand{\ste}{\Diamond\{\cdot\}\lstrike? \rstrike}
\newcommand{\te}{\Diamond\lstrike? \rstrike}

\newcommand{\sten}{\Diamond\{\cdot\}\lstrike? \rstrike^\C}
\newcommand{\ten}{\Diamond\lstrike? \rstrike^\C}
\newcommand{\tenb}{\Box\lstrike? \rstrike^\C}
\newcommand{\sch}{\Diamond\{\cdot\}\lstrike \cup \rstrike}
\newcommand{\ch}{\Diamond\lstrike \cup \rstrike}
\newcommand{\chb}{\Box\lstrike \cup \rstrike}
\newcommand{\schn}{\Diamond\{\cdot\}\lstrike \cup \rstrike^\C}
\newcommand{\chn}{\Diamond\lstrike \cup \rstrike^\C}
\newcommand{\chnb}{\Box\lstrike \cup \rstrike^\C}
\newcommand{\sseq}{\Diamond\{\cdot\}\lstrike {;} \rstrike}
\newcommand{\seq}{\Diamond\lstrike {;} \rstrike}
\newcommand{\seqb}{\Box\lstrike {;} \rstrike}
\newcommand{\srep}{\Diamond\{\cdot\}\lstrike^*\rstrike}
\newcommand{\rep}{\Diamond\lstrike^*\rstrike}
\newcommand{\repb}{\Box\lstrike^*\rstrike}
\newcommand{\srepn}{\Diamond\{\cdot\}\lstrike^*\rstrike^\C}
\newcommand{\repn}{\Diamond\lstrike^*\rstrike^\C}
\newcommand{\repnb}{\Box\lstrike^*\rstrike^\C}
\newcommand{\sFP}{\Diamond\{\cdot\}\text{FP}}
\newcommand{\FP}{\Diamond\text{FP}}
\newcommand{\FPb}{\Box\text{FP}}
\newcommand{\sind}{\Diamond\{\cdot\}\text{ind}}
\newcommand{\ind}{\Diamond\text{ind}}
\newcommand{\indb}{\Box\text{ind}}
\newcommand{\sMon}{\Diamond\{\cdot\}\text{M}}
\newcommand{\Mon}{\Diamond\text{M}}
\newcommand{\Monb}{\Box\text{M}}

\newcommand{\xir}{\Diamond\Xi}
\newcommand{\xirb}{\Box\Xi}
\newcommand{\co}{\mathcal{C}}

\newcommand{\win}[1]{\llbracket #1 \rrbracket}

\newcommand{\dGL}[1][]
{\text{\upshape\textsf{d{\kern-0.05em}G{\kern-0.15em}$\mathcal{L}$}}\xspace}
\newcommand{\dGLsc}{\text{\upshape\textsf{d{\kern-0.05em}G{\kern-0.15em}$\mathcal{L}_{sc}$}}\xspace}
\newcommand{\QdL}[1][]
{\text{\upshape\textsf{$\mathcal{Q}$d{\kern-0.05em}$\mathcal{L}$}}\xspace}

\newcommand{\dLchp}[1][]
{\text{\upshape\textsf{d{\kern-0.05em}$\mathcal{L}_{\text{CHP}}$}}\xspace}

\newcommand{\lstrike}{\mathopen{\mathrel{\langle}\joinrel\mathrel{[}}}
\newcommand{\rstrike}{\mathclose{\mathrel{]}\joinrel\mathrel{\rangle}}}

\newcommand{\mo}[2]{\lstrike #1 \rstrike (#2)}

\newcommand{\proj}[3]{\Xi(#1,#2,#3)}

\newcommand{\dia}[2]{\Diamond #1 : #2}
\newcommand{\bx}[2]{\square #1 : #2}
\newcommand{\angel}[2]{\langle #1 \rangle #2}
\newcommand{\demon}[2]{[#1]#2}
\newcommand{\dc}{^{-c}}
\newcommand{\winrA}[2]{\varsigma_{#1}(#2)}
\newcommand{\winrD}[2]{\delta_{#1}(#2)}

\usepackage{ifpdf}
\ifpdf
\fi

\nolinenumbers
\title{Three-player Differential Game Logic}
\author{Julia Butte}{Department of Informatics, Karlsruhe Institute of Technology, Karlsruhe, Germany}{julia.butte@kit.edu}{https://orcid.org/0009-0003-5066-8412}{}

\author{Andr\'e Platzer}{Department of Informatics, Karlsruhe Institute of Technology, Karlsruhe, Germany}{platzer@kit.edu}{https://orcid.org/0000-0001-7238-5710}{}

\authorrunning{J. Butte and A. Platzer}

\Copyright{Julia Butte and Andr\'e Platzer}

\ccsdesc{Theory of computation~Modal and temporal logics}

\keywords{Differential game logic, Hybrid systems, Hybrid games, Nonzero-sum games, Three-player games, Coalitions}

\acknowledgements{This work was funded by an Alexander von Humboldt Professorship and by KiKIT, the Pilot Program for Core-Informatics at the KIT of the Helmholtz association.}

\begin{document}
\maketitle
\begin{abstract}
    This paper introduces the \emph{three-player differential game logic} \logic, which enables the verification of hybrid games of discrete and differential equation dynamics with three players who may or may not form coalitions.
    Each player has an individual goal they try to fulfill, so the game becomes non-zero-sum when the goals of the players overlap.
    This is how \logic can to verify complex situations involving multiple players, taking into account their coalitional power.
    \logic excels at verifying games where players share safety objectives but otherwise pursue different goals, so that they may or may not collaborate.
    In this case, zero-sum assumptions lead to overly conservative results by neglecting the potential of coordination amongst the players.
    In this paper, the syntax and semantics of \logic are presented and crucial properties of \logic are proved.
    A sound and relatively complete proof calculus for \logic is introduced and the use of \logic is illustrated in a canonical example.
\end{abstract}

\section{Introduction}

Cyber-physical systems (CPS) like trains, planes, robots or autonomous cars \cite{DBLP:journals/tcad/KabraMP22,Lam2017, Taha2020,Jiang2010, DBLP:conf/cade/BriegerMP23} are all around us.
Their safety is critical in order to avoid damage to persons and goods.
Particularly challenging are cases where multiple CPSs with different goals interact with each other.
If all CPS agents are of one mind and cooperate to be controlled together, the resulting model is a hybrid system or single-player hybrid game \cite{Lin2021,Asarin2006,Platzer10,Alur15}.
If two different CPS agents with different goals are controlled separately and interact, the resulting model is a (two-player) hybrid game \cite{Henzinger19999, Platzer15,VLADIMEROU20116770,Mitchell2005}, because the players may reach decisions that interfere with one another.
The key challenge for truly multi-agent CPS, however, is when \emph{three} different CPS players interact, because that is when an entirely new genuinely \emph{multi}-agent dynamics arises: the three players interact with one another and may or may not form coalitions to work together and jointly achieve their respective goals.
Coalitions are a true three player challenge.
Coalitions are no challenge for two-player hybrid games, because coalitions reduce two-player hybrid games to single-player hybrid systems.
In three-player hybrid games, however, a proper hybrid game remains even when only two out of the three players decide to form a coalition, making the deliberation about whether or not and which coalition to form in a three-player hybrid game a genuinely novel challenge---the one that this paper addresses.
While hybrid games with even more than three players are of interest, too, the fundamental challenges already arise and can be solved in three-player hybrid games, which this paper, thus, focuses on for notational clarity.

This paper introduces the three-player differential game logic \logic for three-player hybrid games with discrete jumps and differential equation dynamics, which, importantly, may or may not form coalitions.
An axiomatization for \logic is presented and shown to be sound and complete relative to any differentially expressive logic.
In particular, it is shown that the logic of three-player games reduces to the logic of two-player games.
Thereby, this paper proof-theoretically equates
\[
\text{two-player hybrid games} = \text{three-player hybrid games} ~\text{(logically)}
\]
This is by no means clear \emph{a priori}, as three-player hybrid games introduce fundamentally new challenges that in and of themselves are beyond two-player hybrid games, like their ability to form coalitions at will.
This result is at first surprising and a consequence of the versatility of logic and its closure under all operators and modalities, not of just two-player hybrid games.
It demonstrates, however, that \logic is, indeed, suitably chosen as the logic of three-player hybrid games, and completeness establishes that \logic has all required reasoning principles to tame three-player hybrid games and their coalitional power.

The structure of the paper is as follows:
First, Section\,\ref{related_works} compares and relates \logic with existing work.
After that, Section\,\ref{preliminaries} briefly introduces the logic \dGL which will be used for the relative completeness proof of the calculus.
Then, Section\,\ref{dGL3} presents the syntax and semantics of \logic.
Section\,\ref{important_properties} establishes important properties of \logic. 
Finally, Section\,\ref{proof_calculus} introduces a sound and relatively complete proof calculus for \logic.

\section{Related Work}
\label{related_works}

\paragraph{General}
Three-player games are exceedingly challenging and have been considered by various communities.
Game theory provides fundamental definitions of three-player games and studies of their equilibria \cite{vonNeumannMorgenstern+2004,Riker_1967,Nash50} which lay the basis for our work.
Von Neumann and Morgenstern \cite{vonNeumannMorgenstern+2004} already detail the aspect that makes three-player games hard to handle and to understand:
Having three players allows the formation of coalitions amongst players which is not possible with only one or two players.
This adds an entirely new dynamic to the game, setting it distinctly apart from having only two players.
In terms of game theory, the games played in \logic are non-zero-sum sequential three-player games with perfect information and without transferable payoffs, but with uncountable action and state spaces and differential equations.
Like in subgame perfect equilibria \cite{Bielefeld1988}, \logic uses backward induction \cite{Aumann95} to determine whether a player can reach their goal, but it does so symbolically.
\logic has the advantage that no game trees are required for the backward induction which might have uncountably infinite breadth and unbounded depth due to the continuous dynamics.

For some special, non-hybrid cases of multiplayer games, rational synthesis  \cite{Fisman2010}, developed by  Fisman \emph{et al.}, can construct correct-by-construction solutions.
Unlike \logic, they cannot handle hybrid dynamics.
Another difference is that \logic is for verifying already existing systems, while rational synthesis is for constructing correct systems.

\paragraph{1-Player Games}
Hybrid systems without any game aspects have been investigated in some variety:
Differential dynamic logic enables verification of hybrid systems by extending dynamic logic with continuous dynamics \cite{DBLP:journals/jar/Platzer08}. 
\dLchp[] \cite{DBLP:conf/cade/BriegerMP23} addresses parallel hybrid systems and communication.
Hybrid Communicating Sequential Processes allows communication between processes \cite{Jiang2010}.

\paragraph{1\sfrac{1}{2}-Player Games}
Probabilistic hybrid systems can be seen as games with one and a half player where the half player represents probability and does not act with a strategy but randomly. 
Such systems are e.g. addressed by Lygeros \cite{Lygeros2006}, Koutsoukous and Rily \cite{Riley2008}, Platzer \cite{DBLP:conf/cade/Platzer11}, Fr{\"a}nzle \emph{et al.} \cite{Fraenzle2011}, and Zhao and Rozier \cite{Zhao2014}.
Unlike our work, none of these can handle game dynamics, not to mention three players.

\paragraph{2-Player Games}
Two-player games are investigated by different approaches:
Rectangular hybrid games \cite{Henzinger19999} by Henzinger \emph{\emph{et al.}} and STORMED games \cite{VLADIMEROU20116770} by Vladimerou \emph{\emph{et al.}} both consider hybrid games based on linear hybrid automata.
Unlike the hybrid games in \logic, these automata lack compositionality.
Mitchell \emph{et al.} \cite{Mitchell2005} numerically approximate the winning regions of two-player continuous games using Hamilton-Jacobi-Isaac partial differential equations. 
\logic instead, computes winning regions symbolically and accurately.
The logic \dGL \cite{Platzer15}, which \logic is based on, uses symbolic backward induction to determine the winning regions of two-player adversarial games.
The logic \dGLsc \cite{DBLP:conf/tableaux/ButteP25} can handle non-zero-sum games for two players.
None of these approaches addresses three players.

\paragraph{3-Player Games}
Tulenheimo and Venema \cite{tulenheimo2008} reduce propositional SAT checking to a three-player game.
In contrast to \logic, they are not reasoning \emph{about} three-player games, but they \emph{use} three-player games to determine the truth value of a propositional formula.

\paragraph{n-Player Games}
Alternating-time temporal logic (ATL) \cite{Alur2002} and stochastic game logic \cite{Baier2007} deal with non-hybrid multiplayer games with fixed coalitions.
Unlike in \logic, coalitions are fixed at the start which immediately breaks the game down to a fixed two-player game. 
As coalition formation is not considered, player goals are also not included, in contrast to \logic.

The logic of Agotnes \emph{et al.} \cite{agotnes2006} also handles non-hybrid games and takes player goals into account with a preference relation.
In contrast to \logic, preferability is only used to compare outcomes but not to exploit cooperation potential.
Unlike in \logic, the games are not composable, so proofs for subgames cannot be reused.

\QdL \cite{DBLP:conf/csl/Platzer10} can handle an arbitrary and changing number of players in a hybrid system but lacks \logic's game aspect, so all players have to work together in a multi-agent system. 

\section{Premliminaries}
\label{preliminaries}
In this section, the logic \dGL will be briefly introduced to enhance understanding of the paper.
\dGL is a logic for reasoning over two-player zero-sum games.
The two players are called \emph{Angel} and \emph{Demon}.
They play a hybrid game which is specified as part of the formula.
Now, syntax, semantics and the \dGL[] proof calculus will be briefly explained, based on \cite{Platzer2018}.

The syntax of \dGL is based on first-order logic.
Additionally, there are two modalities, the diamond modality $\angel{\alpha}{P}$ and the box modality $\demon{\alpha}{P}$.
The diamond modality means that Angel can win the game $\alpha$ by reaching her goal $P$ after the game has ended.
The box modality is its counterpart for Demon.
It means that Demon can win game $\alpha$ by achieving his goal $P$ after the game has ended.
The games are zero-sum, so the other player's goal is the opposite of the given goal and not specified separately.
Formally, the syntax is defined as:

\begin{definition}[\dGL syntax]
    Formulas of \dGL are defined by the grammar:
    \[\alpha, \beta ::= x:=e \mid ?Q \mid x'=f(x) \& Q \mid \alpha;\beta \mid \alpha \cup \beta \mid \alpha^* \mid \alpha^d\]
    \[P, Q ::= e \geq\tilde{e} \mid \lnot P \mid P \land Q \mid \forall x P \mid \exists x P \mid \angel{\alpha}{P} \mid \demon{\alpha}{P}\]
    where $\alpha$, $\beta$ are hybrid programs, $x$ is a variable, $e$, $\tilde{e}$ are terms and $P$, $Q$ are formulas.
\end{definition}

A list of hybrid games and their effects can be found in Table \ref{tab:dglgames}.

\begin{table}[tbhp]
    \caption{Hybrid games}
    \label{tab:dglgames}
    \begin{tabularx}{\textwidth}{|l| l |X|}
        \hline
        \textbf{Game} & \textbf{Name} & \textbf{Meaning}\\
        \hline
        $x:=e$ & Assignment game & assigns $e$ to $x$\\
        $x'=f(x)\&Q$ & Continuous game & Angel evolves ordinary differential equation (ODE) to change value of $x$ while evolution domain constraint $Q$ has to hold\\
        $?Q$ & Test game & tests if Angel fulfills $Q$, if not, she loses and Demon wins\\
        $\alpha \cup \beta$ & Choice game & Angel chooses to play either $\alpha$ or $\beta$\\
        $\alpha; \beta$ & Sequential game & $\alpha$ and $\beta$ are played sequentially\\
        $\alpha^d$ & Dual game & controls in $\alpha$ are swapped between Angel and Demon\\
        $\alpha^*$ & Repetition game & $\alpha$ is repeated finitely many times until Angel stops\\
        \hline 
    \end{tabularx}
    \end{table}

The formulas are interpreted over states. 
Each state is a function $\omega: \mathcal{V} \mapsto \mathbb{R}$ which maps all variables in the set of all variables $\mathcal{V}$ to a value in $\mathbb{R}$.
The set of all states is called $\s$. 
$\omega_x^r$ denotes the state where all variables are similar to $\omega$ except for $x$ whose value has been replaced by $r$. 
The semantics of \dGL is defined as a function $\win{\cdot}: Fml \mapsto \mathcal{P}(\s)$ which maps a formula to all the states where it is true.
For first-order formulas, the semantics is as usual.
The diamond and the box modality is defined via the functions $\varsigma_\alpha(\cdot)$ and $\delta_\alpha(\cdot)$ resp., which retrace from which states one has to start to reach a goal state in the end. 

\begin{definition}{\emph{(Semantics)}}
    The \dGL semantics is:
    \begin{itemize}
        \item $\win{\angel{\alpha}{P}} = \winrA{\alpha}{\win{P}}$
        \item $\win{\demon{\alpha}{Q}} = \winrD{\alpha}{\win{Q}}$
    \end{itemize}
    Angel's function for her winning region $\varsigma_\alpha(\cdot)$ is defined as follows:
    \begin{itemize}
        \item $\winrA{x:=e}{X} = \{\omega\in \s \;|\; \omega_x^{\omega\win{e}} \in X\}$
    \item $\winrA{x'=f(x)\&Q}{X} = \{\varphi(0) \in \s \;|\; \varphi(r)\in X \text{ for some } r \geq 0 \text{ and (differentiable) }\allowbreak \varphi:[0,r] \to  \s \text{ such that } \varphi(s) \in \win{Q} \text{ and } \frac{d\varphi(t)(x)}{dt}(s) = \varphi(s)\win{f(x)} \text{ for all }\allowbreak 0 \leq s \leq r\} \stackrel{\text{def}}{=} \{\varphi(0) \in \s \;|\; \varphi(r)\in X \text{ for some } r \geq 0 \text{ with }\varphi \models x'=f(x) \land Q\}$
        \item $\winrA{?Q}{X} = \win{Q} \cap X$
        \item $\winrA{\alpha\cup \beta}{X} = \winrA{\alpha}{X}\cup \winrA{\beta}{X}$
        \item $\winrA{\alpha;\beta}{X} = \winrA{\alpha}{\winrA{\beta}{X}}$
        \item $\winrA{\alpha^d}{X} = \winrA{\alpha}{X^\C}^\C$
        \item $\winrA{\alpha^*}{X} = \bigcap\{Z\subseteq \s\;|\; X\cup\winrA{\alpha}{Z} \subseteq Z\}$
    \end{itemize}
    Demon's winning region is defined by $\winrD{\alpha}{X} = \winrA{\alpha}{X^\C}^\C$ (\cite[Th. 3.1]{Platzer15}), i.e., Demon wins whenever Angel fails to reach the opposite of his goal, because he wins whenever Angel loses.
\end{definition}

The proof calculus for \dGL consists of all first-order logic proof rules and the rules shown in Table \ref{tab:dgl_calculus}.
There is one axiom per game construct in the diamond modality that breaks the game construct into smaller parts.
Additionally, there is another proof rule for the repetition game which allows one to get rid of the loop entirely, instead of just "unrolling" it, like the axiom for the repetition game does.
Furthermore, there is a determinacy axiom that links box and diamond modality.
Finally, the calculus also contains a monotonicity rule.

\renewcommand{\arraystretch}{1.1}
\begin{figure}[htb]
    
    \begin{tabularx}{\textwidth}{>{\color{gray}}l X >{\color{gray}}l l}
    $[\cdot]$ & $\demon{\alpha}{P} \leftrightarrow \lnot \angel{\alpha}{\lnot P}$ & $\langle \cup \rangle$ & $\angel{\alpha \cup \beta}{P} \leftrightarrow \angel{\alpha}{P} \lor \angel{\beta}{P}$ \\
    $\langle := \rangle$ & $\angel{x:=e}{p(x)} \leftrightarrow p(e)$ & $\langle {;} \rangle$ & $\angel{\alpha;\beta}{P} \leftrightarrow \angel{\alpha}{\angel{\beta}{P}}$ \\
    $\langle ' \rangle$ &  $\angel{x'=f(x)}{P} \leftrightarrow \exists t{\geq}0\, \angel{x:=y(t)}{P}$ \hskip 7pt $(y'= f(y))$ & $\langle^d\rangle$ & $\angel{\alpha^d}{P} \leftrightarrow \lnot \angel{\alpha}{\lnot P}$ \\
    $\langle?\rangle$ & $\angel{?Q}{P} \leftrightarrow Q \land P$ & $\langle^*\rangle$ & $\angel{\alpha^*}{P} \leftrightarrow P \lor \angel{\alpha}{\angel{\alpha^*}{P}}$\\
    M &
    \AxiomC{$P \rightarrow Q$}
    \UnaryInfC{$\angel{\alpha}{P} \rightarrow \angel{\alpha}{Q}$}
    \DisplayProof 
     & FP &
     \AxiomC{$P \lor \angel{\alpha}{Q} \rightarrow Q$}
    \UnaryInfC{$\angel{\alpha^*}{P} \rightarrow Q$}
    \DisplayProof
\end{tabularx}
\caption{\dGL proof calculus}
\label{tab:dgl_calculus}
\end{figure}
\section{Three-Player Differential Game logic \logic}
\label{dGL3}

    The logic \logic features three players numbered 1, 2 and 3.
    The players can work together in coalitions or play on their own.
    If a player plays on their own, they form a singleton coalition with only one member. 
    Coalitions are given as sets that contain the respective numbers of its members. 
    The set of all possible coalitions $\co$ is the powerset over the set $\{1,2,3\}$ excluding the empty set, i.e. $\co=\mathcal{P}(\{1,2,3\})\setminus \emptyset$, because coalitions without any players do not exist. 
    In a coalition, the players work together to achieve the intersection of the goals of its members.
    Players outside the coalition may play adversarially in the worst case.
    At any time, every player has to be in some coalition, which can also be the singleton coalition.

\subsection{Syntax}
    The syntax of \logic consists of first-order logic with real arithmetic and two kinds of modalities.
    The formulas are separated into state formulas and game formulas.
    To distinguish the formulas, Greek letters like $\pi, \phi$, etc. are used for  game formulas while standard Latin letters are used for state formulas.
    Game formulas \emph{cannot} stand on their own and have to be part of a state formula.
    Their special characteristic is that these game formulas are interpreted over tuples of states and coalitions, in contrast to state formulas which are interpreted over just states.
    The two modalities $\dia{C}{\pi}$ and $\bx{C}{\pi}$ are used to project the tuples from the game formulas back to states.
    These modalities are true in a state, if there exists a coalition in the specified set of coalitions $C$, resp. for all coalitions in $C$, so that the resulting state-coalition tuple is true for game formula $\pi$. 
    As the name suggests, game formulas are mainly used to describe games.
    This can be done with an additional third modality $\mo{\alpha}{\pi}$.
    Here, $\alpha$ describes the hybrid game that is played, i.e. it specifies the players' interactions,
    while $\pi$ describes the goals of the players.
    If a tuple $(\omega, c)$ of a coalition and a state is true for $\pi$, then coalition $c$ tries to reach state $\omega$ after game $\alpha$.
    The game formula $\mo{\alpha}{\pi}$ is true for a state-coalition tuple $(\omega, c)$, if coalition $c$ can reach one of its goal states after $\alpha$ starting from state $\omega$.
    The goal constructor $\proj{P_1}{P_2}{P_3}$ allows constructing goals for all possible coalitions from the players' individual goals.
    Here, $P_i$ is player $i$'s goal.
    Note that, as normal alphabet letters are used, these goals are expressed as state formulas.
    The goal of each coalition is constructed from the individual goals by taking the conjunction of the goals of the respective members of the coalition.
    State formulas can also be parsed directly to game formulas by using the $^\co$-operator on a state formula $P$, i.e. $P^\co$.
    This is the cross product of the states where $P$ is true with the set of all coalitions.
    That means $P^\co$ is true in all states where $P$ is true in combination with any coalition.
    Last but not least, game formulas can also be combined using basic propositional logic connectors.
    
    \begin{definition}[Syntax]
        The following grammar defines the \logic formulas:
        \begin{itemize}
            \item {\makebox[1.5cm][l]{$\alpha; \beta$} $::= x:=e \mid \{x'=f(x)\&Q\}_i \mid ?_iQ \mid \alpha \cup_i \beta \mid \alpha;\beta \mid \alpha^{*i}$}
            \item {\makebox[1.5cm][l]{$P_1, P_2$} $\begin{aligned}[t]
                ::= &e \geq \tilde{e} \mid \lnot P_1 \mid P_1 \land P_2 \mid \forall x P_1\\
                & \mid \exists x P_1 \mid \dia{C}{\pi_1} \mid \bx{C}{\pi_1}
            \end{aligned}$}
            \item {\makebox[1.5cm][l]{$\pi_1, \pi_2$} $::= P_1^\co \mid \pi_1 \land \pi_2 \mid \lnot \pi_1 \mid \mo{\alpha}{\pi_1} \mid \proj{P_1}{P_2}{P_3}$}
        \end{itemize}
       where  $i\in \{1,2,3\}$, $\alpha, \beta$ are hybrid games, $P_1, P_2, P_3, Q$ are state formulas, $\pi_1, \pi_2$ are game formulas, $C$ is a set of coalitions, $x$ is a variable and $e, \tilde{e}$ are terms.
    \end{definition}

The effects of the hybrid games are described in Table \ref{tab:games}.
 \begin{table}[tbhp]
    \caption{Hybrid games}
    \label{tab:games}
    \begin{tabularx}{\textwidth}{|l| l |X|}
        \hline
        \textbf{Game} & \textbf{Name} & \textbf{Meaning}\\
        \hline
        $x:=e$ & Assignment game & assigns $e$ to $x$\\
        $\{x'=f(x)\&Q\}_i$ & Continuous game & player $i$ evolves ODE to change value of $x$ while evolution domain constraint $Q$ has to hold\\
        $?_iQ$ & Test game & tests if player $i$ fulfills $Q$, if not, they lose and players outside their coalition win automatically\\
        $\alpha \cup_i \beta$ & Choice game & player $i$ chooses to play either $\alpha$ or $\beta$\\
        $\alpha; \beta$ & Sequential game & $\alpha$ and $\beta$ are played sequentially\\
        $\alpha^{*i}$ & Repetition game & $\alpha$ is repeated finitely many times until player $i$ stops\\
        \hline 
    \end{tabularx}
    \end{table}
    
  For example, in the game $(({x:=x+1} \cup_2 {x:=x-1});\{x'=-1\}_3)^{*1}$, the outermost game is a repetition game.
  The index 1 indicates that it is under control of the first player.
  They decide after each round, whether they want to play another one.
  They can also decide to play no rounds at all.
  In each loop, the second player decides whether to increment or decrement $x$ by one which can be seen by the choice operator being indexed with two.
  After that, the control is handed over to the third player which can be seen by the next game being indexed with three.
  This player then evolves the ODE $x'=-1$, i.e. time runs and $x$ changes according to the ODE.
  In this case, $x$ decreases continuously until player 3 decides to stop time to fix a new value of $x$.
  Player 3 is also allowed to stop the time after zero time has passed which leaves the value of $x$ unchanged.
  Assume now, that player 2 tries to achieve that $x>0$ and player 3 tries to achieve $x<0$ while player 1 has the trivial goal \emph{true}.
  These goals can be expressed using the goal constructor: $\proj{\top}{x>0}{x<0}$.
  Consequently, player 3 can win by teaming up with player 1.
  At the same time, player 2 and player 3 cannot win in a coalition because their goals are mutually exclusive.
  Consequently, 
  \[\dia{\{\{1,3\}\{2,3\}\}}{\mo{((x:=x+1\cup_2 x:=x-1);\{x'=-1\}_3)^{*1}}{\proj{\top}{x>0}{x<0}}}\] 
  is true in any state because some coalition from \{\{1,3\}\{2,3\}\} can win.
  On the other hand, 
  \[\bx{\{\{1,3\}\{2,3\}\}}{\mo{((x:=x+1\cup_2 x:=x-1);\{x'=-1\}_3)^{*1}}{\proj{\top}{x>0}{x<0}}}\]
   is not true because not all of the coalitions can win.

  \begin{example}
    \label{ex}
    In this little example, player 1 is car driver, player 2 is a filling station attendant and player 3 is a motorcycle rider.
    The car driver and the motorcycle rider are both out of gas.
    Unfortunately, the filling station attendant has only 5l of gas left and that is just enough for one of them to get to the next town.
    As the gas station is in the middle of nowhere, less gas would not be helpful for the drivers.
    Consequently, each driver wants to form a coalition with the filling station attendant who does not care whom he sells the gas, so his goal is \emph{true}.
    In \logic this situation can be modeled using two continuous games:
    \[
    t_c=0 \land t_m=0 \rightarrow \dia{C_1}{\mo{\{t_c'=1\}_2\cup_2 \{t_m'=1\}_2}{\proj{t_c=5}{\top}{t_m=5}}}
    \]
    At the beginning, the amount $t_c$ of gas in the car and the amount $t_m$ of gas in the motorcycle is zero.
    In the game, the filling station assistant chooses to either fill the car's or the motorcycle's tank.
    His control is shown by indexing the games with a two. 
    It should be determined whether the first player, the car driver can win the game, so the diamond modality is used with the set $C_1$ which is the abbreviation for the set of all coalitions containing player 1, i.e. $\{1\}, \{1,2\}, \{1,2,3\}$.
    This asks if there exists a coalition including player 1 that can win the game.
    This is possible, because they can win by forming a coalition with the assistant.
    The same formula, but with a box modality is \emph{not} true because not all the coalitions can win.
    Player 1 cannot win on their own because the assistant might sell the gas to the motorcycle rider.
    All players cannot win at the same time too, so coalition $\{1,2,3\}$ cannot win either.
  \end{example}

\subsection{Semantics}
This section defines the semantics of \logic formulas via a function $\win{\cdot}: \textit{Fml} \to \mathcal{P}(\s)$ which maps each state formula to the set of states where it is true.
For game formulas, the function $\win{\cdot}: \textit{Fml}_{\text{game}} \to \mathcal{P}(\s \times \co)$ returns tuples (instead of states), consisting of a state where the formula is true and a corresponding coalition.
$\s$ denotes the set of all states. 
A state is a mapping $\omega: \mathcal{V} \to \mathbb{R}$ that maps all variables to a real number.
The state $\omega^r_x$ is a state where all variables have the same values as in $\omega$, except for $x$ whose value is $r$.
$\co$ denotes the set of all coalitions.
$C_i = \{c \in \co \mid i\in c\}$ is the set of all coalitions containing player $i$.

The semantics of the first-order formulas is as usual.
The definition of the conjunction and the disjunction which appear in both state formulas and game formulas is the same, so these definitions are only given once.
The diamond modality $\dia{C}{\pi}$ is true in all states $\omega$ where there exists a coalition $c \in C$, such that $(\omega, c)$ is true for $\pi$.
The box modality $\bx{C}{\pi}$ is true in all states $\omega$ where $(\omega, c)$ is true for $\pi$ for all coalitions $c \in C$. 
The game modality $\mo{\alpha}{P}$ is true in all tuples $(\omega, c)$ where the coalition $c$ can reach their goal after playing game $\alpha$ starting from state $\omega$.
The semantics for this modality uses an additional function $\win{\alpha}(\cdot)$ with the hybrid game $\alpha$ as parameter that is defined later in this section.
The goal constructor $\proj{P_1}{P_2}{P_3}$ is true in all tuples $(\omega, c)$ where $\omega$ is in the goals of all players who are part of $c$.
$P^\co$ is true in those tuples $(\omega, c)$ where $P$ is true in $\omega$.
Formally, the semantics is defined as:
\begin{definition}[\logic semantics]
The semantics of a state formula $P$ of \logic is the subset $\win{P} \subseteq\s$ and the semantics of a game formula $\pi$ of \logic is the subset $\win{\pi} \subseteq \s \times \co$. Those subsets are defined inductively as follows:
\begin{itemize}
    \item $\win{e \geq \tilde{e}} = \{\omega \in \s \mid \omega\win{e} \geq \omega \win{\tilde{e}}\}$
    \item $\win{\lnot P_1} = \win{P_1}^\C$
    \item $\win{P_1 \land P_2} = \win{P_1}\cap\win{P_2}$
    \item $\win{\forall x P_1} = \{\omega \in \s \mid \omega_x^r \in \win{P_1} \text{ for all }r\}$
    \item $\win{\exists x P_1} = \{\omega \in \s \mid \omega_x^r \in \win{P_1} \text{ for some } r\}$
    \item $\win{\dia{C}{\pi}} = \{\omega \in \s \mid \exists c \in C: (\omega, c) \in \win{\pi}\}$
    \item $\win{\bx{C}{\pi}} = \{\omega \in \s \mid \forall c \in C: (\omega, c) \in \win{\pi}\}$
    \item $\win{P_1^\co} = \win{P} \times \co$
    \item $\win{\mo{\alpha}{\pi}} = \win{\alpha}(\win{\pi})$
    \item $\win{\proj{P_1}{P_2}{P_3}} = \{(\omega, c) \in \s \times \co \mid \omega \in \bigcap_{i\in c} \win{P_i}\}$
\end{itemize}      
\end{definition}
The function $\win{\alpha}(\cdot):\mathcal{P}(\s \times \co) \mapsto \mathcal{P}(\s \times \co)$ describes the winning region of game $\alpha$.
It takes in a set of state-coalition tuples and outputs a set of state-coalition tuples. 
A tuple $(\omega,c)$ returned by this function means that starting in state $\omega$, the coalition $c$ can win game $\alpha$ by achieving its goal. 
Goals are given to the function as input as state-coalition tuples. A tuple $(\omega, c)$ in the argument set means that the state $\omega$ is part of the goal of coalition $c$. 
The function itself is defined recursively over the hybrid game $\alpha$.
For the assignment game and the sequential game, the definition is very straightforward as both games do not involve choices and thus are not associated to any particular player.

A coalition $c$ can win the assignment game $x:=e$, if the state that is reached by replacing the value of variable $x$ by $e$, in a tuple with $c$, is part of the goal set $X$. 
In other words, a coalition can win if replacing the value of $x$ by that of $e$ leads to a state that is in the coalition's goal. Formally, this can be written as:
\[\win{x:=e}(X) = \{(\omega, c) \in \s \times C \mid (\omega_x^{\omega\win{e}}, c) \in X\}\]
To win the sequential game $\alpha;\beta$, a coalition first has to be able to win in game $\alpha$ into the winning region of game $\beta$ for the original goal $X$. From there, the coalition can then reach $X$ after game $\beta$. 
Consequently, the goal for game $\alpha$ is the winning region of game $\beta$:
\[\win{\alpha;\beta}(X) = \win{\alpha}{\big(\win{\beta}(X)\big)}\]
Unlike the previous games, the test game $?_i Q$ \emph{is} associated to a player but still does not contain any choices.
Rather, the truth of $Q$ then simply determines whether player $i$ who is responsible for passing the test $?_i Q$ does indeed pass or loses the game prematurely.
In order to pass the test, all coalitions including player $i$ have to fulfill formula $Q$ and be in their goal, as this game does not change the state. 
If $Q$ is not fulfilled, the coalition loses immediately and the rest automatically wins the game.
Consequently, all coalitions that do not include player $i$ can win if either $Q$ is not fulfilled, so the coalition of $i$ loses prematurely, or if they are in their goal:
\[\win{?_i Q}(X) = \{(\omega,c) \in \win{Q^\co} \cap  X \mid i \in c\}
\cup \{(\omega,c) \in\win{Q^\co}^\C  \cup  X \mid i \not \in c\}\]
In the choice game $\alpha\cup_i\beta$, player $i$ chooses whether to play game $\alpha$ or game $\beta$. 
All coalitions including player $i$ can win this game, if they can either win $\alpha$ or $\beta$, as they have player $i$ who can choose. 
The rest of the coalitions not including player $i$, have to be able to win both $\alpha$ and $\beta$ because they cannot rely on player $i$ choosing a beneficial option for them. 
Of course, all coalitions \emph{with} player $i$ can win as well, if they can win both games, so this part applies to all coalitions:
\[\win{\alpha\cup_i\beta}(X) = \begin{aligned}[t]
    &(\win{\alpha}(X) \cap \win{\beta}(X))\\
    & \cup (\{(\omega, c) \in \win{\alpha}(X) \mid i \in c\}\\
    & \cup \{(\omega, c) \in \win{\beta}(X) \mid i \in c\})
    \end{aligned}\]
In the continuous game $\{x'=f(x) \& Q\}_i$, player $i$ decides how long the ODE should be evolved while staying inside the \emph{evolution domain constraint} $Q$. 
This means, $i$ can evolve the ODE as long as $Q$ is true. 
If $Q$ is \emph{false} at the beginning, player $i$ and the other members of their coalition lose immediately, and all others win by default.
If player $i$ can now evolve the ODE and thus change the state to one that lies in their coalition's goal, they win.
All coalitions without player $i$ cannot rely on player $i$ stopping the evolution at an opportune moment for them.
Consequently, they need to be prepared that player $i$ stops evolving at any time. 
So to be able to win the game, they need to be in their goal for all states reachable by the ODE:
\[\begin{aligned}[t]
    &\win{\{x'=f(x)\&Q\}_i}(X)\\
    & = \{(\varphi(0), c) \in \s \times C_i \mid (\varphi(r), c) \in X
    \text{ for some } r\geq 0 \text{ with }\varphi \models x'=f(x) \land Q\}\\
    &\cup \{(\varphi(0), c) \in \s \times C_i^\C\mid (\varphi(r), c)\in X \text{ for all } r \geq 0 \text{ with } \varphi \models x'=f(x) \land Q \}
    \end{aligned}\]
The main idea for the semantics of the repetition game $\alpha^{*i}$ is that one more round of $\alpha$ should not change the winning region. 
For player $i$ and coalitions including them, this means that they can win the game either now or after another round of $\alpha$. 
Formally, this can be written as a fixpoint $Z = X \cap \win{\alpha}(Z)$. 
But this fixpoint is not unique yet.
Candidates for choosing a unique fixpoint are either the greatest or the least fixpoint.
As the empty set always fulfills this equation, the least fixpoint would not convey any information, so the greatest fixpoint is chosen instead.
For coalitions not including $i$, it should also not matter to add another round of $\alpha$.
But as they cannot decide when the loop ends, they want to stay inside their goal, no matter how long the game lasts.
Consequently, they want to stay in their goal now and after another round of $\alpha$. 
Formally, this can again be expressed using a fixpoint of the form $Z = X \cup \win{\alpha}(Z)$.
This fixpoint again, is not unique yet.
This time, the greatest fixpoint is always the whole of $\s \times \co$ which does not provide any useful information.
Therefore, the least fixpoint is chosen, since it corresponds to well-founded repetition.
The least and the greatest fixpoint can be defined by intersecting or unifying all pre-fixpoints respectively, i.e. all sets that fulfill the above fixpoint equation only with set inclusion.
Taken together, the definition for the semantics of the repetition game is:
\[\win{\alpha^{*i}}(X) = \begin{aligned}[t]
    &\bigcap \{Z \subseteq \s \times C_i \mid  X \cup \win{\alpha}(Z) \subseteq Z\}\\
    \cup &\bigcup \{Z \subseteq \s \times C_i^\C \mid X \cap \win{\alpha}(Z) \supseteq Z\}
    \end{aligned}\]

\section{Important Properties}
\label{important_properties}

In this section, some important results are discussed for \logic that are fundamental to its well-definedness or its completeness proof.
First, it is proven that the winning function of the hybrid games is monotone.
Monotonicity means that if the goals of all players are increased, the winning regions increase as well, i.e. there are more possibilities to reach a bigger goal.
This property ensures that the logic behaves in an intuitive way.
It also guarantees that the fixpoints used in the semantics exist.
The following lemma is proven by a straightforward induction over the structure of game $\alpha$:
\begin{lemmaE}[Monotonicity]
    \label{mon}
    The function $\win{\cdot}(\cdot)$ is monotone, i.e., for all hybrid games $\alpha$ and all sets $X\subseteq Y$:
    \[\win{\alpha}(X) \subseteq \win{\alpha}(Y)\]
\end{lemmaE}
\begin{proofE}
    Proof by induction over the structure of $\alpha$:
    \begin{itemize}
        \item For the assignment, test and continuous game, the claim can be directly proven by replacing $X$ with $Y$ in the definition. 
        This also applies to the repetition game.
        \item $\begin{aligned}[t]
            \win{\alpha \cup_i \beta}(X) &= \begin{aligned}[t]
                &(\win{\alpha}(X) \cap \win{\beta}(X))\\
                & \cup (\{(\omega, c) \in \win{\alpha}(X) \mid i\in c\} \cup \{(\omega, c) \in \win{\beta}(X) \mid i\in c\})
            \end{aligned}\\
            & \stackrel{\text{IH}}{\subseteq} \begin{aligned}[t]
                &(\win{\alpha}(Y) \cap \win{\beta}(Y))\\
                & \cup (\{(\omega, c) \in \win{\alpha}(Y) \mid i\in c\} \cup \{(\omega, c) \in \win{\beta}(Y) \mid i\in c\})
            \end{aligned}\\
            &= \win{\alpha \cup_i \beta}(Y)
        \end{aligned}$
        \item $\begin{aligned}[t]
            \win{\alpha;\beta}(X) &= \win{\alpha}(\win{\beta}(X))\\
            &\stackrel{\text{IH}, *}{\subseteq} \win{\alpha}(\win{\beta}(Y))\\
            &= \win{\alpha;\beta}(Y)
        \end{aligned}$

        with $*$: $\win{\beta}(X) \stackrel{\text{IH}}{\subseteq} \win{\beta}(Y)$
    \end{itemize}
\end{proofE}

In the following, it will be proven that for every \logic formula a \dGL formula with an  equivalent winning region can be constructed. 
This result will be used later on to prove the relative completeness of the proof calculus.
In order to simplify the proof, some winning region equivalences that help to break down formulas into smaller parts, are shown:
First, any modality attributed with a set of coalitions $C$ can be broken down into multiple modalities attributed with only one coalition $c \in C$.
For diamond modalities with a single coalition, one can get rid of the modality by equivalently replacing $\dia{\{c\}}{P^\co}$ with just $P$.
A goal constructor behind a diamond modality with a single coalition is true in all states where the coalition members' goals are fulfilled. 
Box modalities with a single coalition need not be considered separately: They are equivalent to diamond modalities with a single coalition.

\begin{lemmaE}[Winning region equivalences]
    \label{winequiv}
    The following winning region equivalences hold in \logic:
    \begin{enumerate}
        \item $\win{\dia{C}{\pi}} = \win{\bigvee_{c \in C}\dia{\{c\}}{\pi}}$
        \item $\win{\bx{C}{\pi}} = \win{\bigwedge_{c \in C}\bx{\{c\}}{\pi}}$
        \item $\win{\dia{\{c\}}{P^\co}} = \win{P}$
        \item $\win{\dia{\{c\}}{\proj{P_1}{P_2}{P_3}}} = \win{\bigwedge_{i\in c}P_i}$
        \item $\win{\dia{\{c\}}{\pi}} = \win{\bx{\{c\}}{\pi}}$
    \end{enumerate}
\end{lemmaE}
\begin{proofE}
    \begin{itemize}
        \item Proof for (1):
        
        "$\supseteq$": $\win{\dia{C}{\pi}} = \{\omega \in \s \mid \exists c \in C; (\omega, c) \in \win{\pi}\}$, so for any state $\omega$ in this semantics, $(\omega, c)$ is in $\win{\pi}$ for some $c\in C$. 
        Consequently, this $\omega$ is also in $\win{\dia{\{c\}}{\pi}}$. 
        No matter which exact coalition $c$ is, this is always a subset of $\bigvee_{c\in C} \dia{\{c\}}{\pi}$.
        Therefore, every $\omega$ that is in $\win{\dia{C}{\pi}}$ is also in $\win{\bigvee_{c\in C} \dia{\{c\}}{\pi}}$.

        "$\subseteq$": If a state $\omega$ is in $\win{\bigvee_{c\in C} \dia{\{c\}}{\pi}}$, then it is in the semantics of $\win{\dia{\{c\}}{\pi}} = \{\omega \in s \mid \exists \tilde{c} \in \{c\}: (\omega, \tilde{c}) \in \win{\pi}\} = \{\omega \in \s \mid (\omega, c) \in \win{\pi}\}$ for some $c \in C$. 
        Consequently, $\omega$ is also in $\{\omega \in \s \mid \exists c \in C: (\omega, c) \in \win{\pi}\} = \win{\dia{C}{\pi}}$.

        \item Proof for (2):
        
        "$\supseteq$": $\bx{C}{\pi} = \{\omega \in \s \mid \forall c \in C: (\omega, c) \in \win{\pi}\}$, so for any state $\omega$ in $\bx{C}{\pi}$, $(\omega, c)$ is in $\win{\pi}$ for all $c\in C$.
        Consequently, $\omega$ is also in $\win{\bx{\{c\}}{\pi}}$, for any $c\in C$.
        Thus, $\omega$ is also part of the intersection of all those sets.
        Therefore, any $\omega$ in $\win{\bx{C}{\pi}}$ is also in $\win{\bigwedge_{c \in C} \bx{\{c\}}{\pi}}$.
    
        "$\subseteq$": If a state $\omega$ is in $\win{\bigwedge_{c \in C} \bx{\{c\}}{\pi}}$, then it is in the semantics of $\win{\bx{\{c\}}{\pi}} = \{\omega \in \s \mid \forall \tilde{c} \in \{c\}: (\omega, \tilde{c}) \in \win{\pi}\} = \{\omega \in \s \mid (\omega,c) \in \win{\pi}\}$ for all $c \in C$.
        Consequently, $\omega$ is also in $\{\omega \in \s \mid \forall c \in C: (\omega,c) \in \win{\pi}\} = \win{\bx{C}{\pi}}$.

        \item Proof for (3):
        
        The semantics of $\dia{\{c\}}{P^\co}$ is $\{\omega \in \s \mid (\omega,c) \in \win{P} \times \co\}$. 
        As $\co$ is the set of all coalitions, a tuple $(\omega, c)$ will exist in $\win{P^\co}$, as long as $\omega$ is in $\win{P}$.
        Consequently, the set reduces to just $\win{P}$ which proves the claim.

        \item Proof for (4):
        
        The semantics of $\dia{\{c\}}{\proj{P_1}{P_2}{P_3}}$ is $\{\omega \in \s \mid (\omega,c) \in \{(\nu, d) \in \s \times \co \mid \nu \in \bigcap_{i\in d} P_i\}\}$.
        As $d$ can only be $c$, $d$ can directly be replaced by $c$.
        This simplifies the set to $\bigcap_{i\in c}P_i$ which is equal to $\win{\bigwedge_{i\in c}P_i}$ which proves the claim.

        \item Proof for (5):
        
        The semantics of $\dia{\{c\}}{\pi}$ is $\{\omega \in \s \mid \exists d \in \{c\}: (\omega, d) \in \win{\pi}\}$.
        As the set $\{c\}$ has only one element, existence is the same as something holding for all elements in the set, so this is equivalent to $\{\omega \in \s \mid \forall d \in \{c\}: (\omega, d) \in \win{\pi}\}$ which is the semantics of $\bx{\{c\}}{\pi}$.
    \end{itemize}
\end{proofE}

\begin{lemmaE}[\dGL equivalence]
    \label{dglequivalence}
    For every \logic formula $P$, a \dGL formula $P^\flat$ can be computed such that 
    \[\win{P} = \win{P^\flat}\]
\end{lemmaE}
\begin{proof}
    (Sketch) The proof is conducted by structural induction over the complexity of the formulas.
    A formula is less complex, if it contains less hybrid games, even if the overall size of the formula increases.
    First-order logic formulas need no translation, as these are contained in both logics with the same semantics.

    Formulas of the form $\dia{\{c\}}{\pi}$ and $\bx{\{c\}}{\pi}$ can be broken down into formulas only containing first-order connectives and diamond modalities with a single coalition, using the results from Lemma \ref{winequiv}.
    Consequently, it only remains to be proven, that an equivalent \dGL formula  for diamond modalities with a single coalition exists.
    The cases for $\dia{\{c\}}{\lnot \pi}$, $\dia{\{c\}}{\pi \land \psi}$, $\dia{\{c\}}{P^\co}$ and $\dia{\{c\}}{\proj{P_1}{P_2}{P_3}}$ can be proven by splitting them into smaller parts, either by definition or with the simplifications proven in Lemma \ref{winequiv}.
    Then, the IH can be used to close the proof.

    In case of the game modality, one can prove that the hybrid game can be broken down to \dGL hybrid games by grouping the players according to the coalition: 
    The players in the coalition become Angel and the rest becomes Demon, breaking the game down to a two-player game as required for \dGL.

    \begin{restatable}[Game modality equivalence]{sublemma}{todglgame}
        \label{todglgame}
        For any coalition $c\in \co$, a diamond modality for $\{c\}$ with a game modality inside can be transformed to Angel's modality in \dGL where games are redistributed such that all games that belong to players in $c$ are attributed to Angel while the rest goes to Demon and the post condition is flattened to a \dGL formula:
        \[\win{\dia{\{c\}}{\mo{\alpha}{\pi}}} = \win{\angel{\alpha\dc}{(\dia{\{c\}}{\pi})^\flat}}\]
        The exact definition of $\dc$ can be found in Table \ref{minuscdef}.
    \end{restatable}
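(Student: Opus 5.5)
The plan is to prove a stronger, set-level statement by structural induction on the hybrid game $\alpha$, and then specialise it. For a fixed coalition $c$ and a set $X\subseteq\s\times\co$, write $X_c=\{\omega\in\s\mid(\omega,c)\in X\}$ for the slice of $X$ at $c$. The key claim is
\[
\{\omega \mid (\omega,c)\in\win{\alpha}(X)\} = \winrA{\alpha\dc}{X_c}
\quad\text{for all } X\subseteq\s\times\co .
\]
In words, the $c$-slice of the \logic winning region depends only on the $c$-slice of the goal, and it equals Angel's \dGL winning region of the redistributed game.

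Given this claim, the sublemma follows quickly. By definition, $\win{\dia{\{c\}}{\mo{\alpha}{\pi}}}$ is the $c$-slice of $\win{\alpha}(\win{\pi})$. The claim turns this into $\winrA{\alpha\dc}{\win{\pi}_c}$. Since $\win{\pi}_c=\win{\dia{\{c\}}{\pi}}$, the outer induction hypothesis of Lemma~\ref{dglequivalence} gives $\win{\pi}_c=\win{(\dia{\{c\}}{\pi})^\flat}$. This is legitimate because $\pi$ contains fewer games than $\mo{\alpha}{\pi}$. Hence the left side equals $\win{\angel{\alpha\dc}{(\dia{\{c\}}{\pi})^\flat}}$.

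I take the definition of $\dc$ in Table~\ref{minuscdef} to be the natural one:
\begin{itemize}
\item $(x:=e)\dc$ is $x:=e$, and $(\alpha;\beta)\dc$ is $\alpha\dc;\beta\dc$.
\item For $i\in c$, the operators $?_iQ$, $\{x'=f(x)\&Q\}_i$, $\alpha\cup_i\beta$ and $\alpha^{*i}$ become the Angel versions $?Q$, $x'=f(x)\&Q$, $\alpha\dc\cup\beta\dc$ and $(\alpha\dc)^*$.
\item For $i\notin c$, they become the dual versions $(?Q)^d$, $(x'=f(x)\&Q)^d$, $(\alpha\dc^d\cup\beta\dc^d)^d$ and $((\alpha\dc^d)^*)^d$.
\end{itemize}

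The induction cases then go as follows.

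\emph{Assignment and sequential composition.} These are immediate. For $;$ I apply the induction hypothesis twice, using that $\win{\beta}(X)_c=\winrA{\beta\dc}{X_c}$.

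\emph{Test, ODE and choice.} These are direct slice calculations on the semantic clauses. For example, for $i\notin c$ the test gives the slice $\win{Q}^\C\cup X_c$, which equals $\winrA{(?Q)^d}{X_c}=(\win{Q}\cap X_c^\C)^\C$. For choice with $i\in c$, the union of the intersection with the two $i\in c$ parts collapses to $\win{\alpha}(X)_c\cup\win{\beta}(X)_c$. With $i\notin c$ only the intersection remains, which matches the dual of the choice via $\winrA{\gamma^d}{Y}=\winrA{\gamma}{Y^\C}^\C$ together with the induction hypothesis.

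\emph{Repetition.} I expect this to be the main obstacle, because the fixpoints are taken over sets $Z\subseteq\s\times C_i$ (or $\s\times C_i^\C$) of pairs, not over state sets. I must show that the slice of the least fixpoint equals the least fixpoint of the sliced operator $Y\mapsto X_c\cup\winrA{\alpha\dc}{Y}$, and similarly for the greatest fixpoint when $i\notin c$.

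The plan for this is as follows. The induction hypothesis says $\win{\alpha}(Z)_c=\winrA{\alpha\dc}{Z_c}$ for all $Z$, so slicing commutes with the operator. It then suffices to prove both inclusions.

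\emph{First inclusion.} If $Y$ is a pre-fixpoint of the sliced operator, then $Z:=(Y\times\{c\})\cup\bigl((\s\times(C_i\setminus\{c\}))\bigr)$ is a pre-fixpoint of the pair-level operator. Monotonicity (Lemma~\ref{mon}) and the induction hypothesis give this on the $c$-component, and the other components are trivially full. So the slice of the pair-level least fixpoint is contained in $Y$.

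\emph{Second inclusion.} Conversely, the slice of any pair-level pre-fixpoint is a sliced pre-fixpoint, again by the induction hypothesis.

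For $i\notin c$, the greatest-fixpoint case is handled dually. Its result is matched to $\winrA{((\alpha\dc^d)^*)^d}{X_c}$ using the \dGL duality $\winrA{(\gamma^d)^*}{Y^\C}^\C=\nu Z.\,(Y\cap\winrA{\gamma}{Z})$.
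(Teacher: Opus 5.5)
Your argument is correct, but it organizes the induction differently from the paper. The paper inducts on $\alpha$ at the level of formulas. Its inner hypothesis is the sublemma itself for arbitrary postconditions, interleaved with the outer hypothesis of Lemma~\ref{dglequivalence} at every step. For example, for $\alpha;\beta$ it rewrites to $\dia{\{c\}}{\mo{\alpha}{\mo{\beta}{\pi}}}$ and flattens the intermediate formula $\dia{\{c\}}{\mo{\beta}{\pi}}$.

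You instead strengthen the claim to the purely semantic locality statement $\win{\alpha}(X)_c=\winrA{\alpha\dc}{X_c}$ for \emph{all} $X\subseteq\s\times\co$. You then invoke the outer hypothesis only for $\dia{\{c\}}{\pi}$ (and, see below, for tests and domain constraints). This pays off in the repetition case. The paper merely asserts that the fixpoint may be restricted to $Z\subseteq\s\times\{c\}$. It then applies its inner hypothesis to $\dia{\{c\}}{\mo{\alpha}{\tilde{Z}^\co}}$ for ``the formula $\tilde{Z}$ whose semantics is $Z$''. That presupposes arbitrary state sets are definable, and it tacitly uses that the $c$-slice of $\win{\alpha}(Z)$ depends only on $Z_c$. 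Your two-inclusion argument with $Z=(Y\times\{c\})\cup(\s\times(C_i\setminus\{c\}))$, and its dual for the greatest fixpoint, proves exactly that step rigorously. In exchange, the paper's formula-level route stays aligned with the syntactic translation $^\flat$ that is reused in Lemma~\ref{provable_inverses}.

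Two small corrections. First, Table~\ref{minuscdef} actually uses $?Q^\flat$ and $\{x'=f(x)\&Q^\flat\}$, because a \logic formula $Q$ may contain modalities. So in the test and ODE cases you also need $\win{Q}=\win{Q^\flat}$. The outer hypothesis supplies this, since $Q$ contains fewer games than $\dia{\{c\}}{\mo{\alpha}{\pi}}$.

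Second, as in the paper, the least-fixpoint clause must be read with the pre-fixpoint condition restricted to $\s\times C_i$. Read literally, $X\subseteq Z\subseteq\s\times C_i$ fails whenever $X$ contains pairs whose coalitions lie outside $C_i$. Your remark that ``the other components are trivially full'' implicitly adopts this reading.
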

    See proof on page \pageref{todglgameproof}.

    This lemma is proven with another induction over the structure of the game $\alpha$. 
    In this case, direct correspondence to a \dGL formula is proven, so no need to use the IH afterward, like in the previous cases.
\end{proof}
\begin{proofE}
    The claim is proven by structural induction over the complexity of the formulas. A formula is structurally less complex if it has less hybrid games, no matter how much the complexity for other connectors increases, or it is less complex, if the number of hybrid games stays the same but the number of other connectors decreases.
    \begin{itemize}
        \item $\win{P \land Q}$ is defined as $\win{P} \cap \win{Q}$. By IH, this is equal to $\win{P^\flat} \cap \win{Q^\flat}$ which is the winning region of the \dGL formula $P^\flat \land Q^\flat$. 
        Consequently, the claim is proven.
        \item Similary to the above case, the definitions for the winning regions of $\lnot P$, $\exists x P$ and $\forall x P$ break these down into winning regions of less complex formulas.
        Thus, the IH can be used, yielding the winning regions of an equivalent \dGL[] formula.
    \end{itemize}

    According to Lemma \ref{winequiv}, any \logic formula of the form $\dia{C}{\pi}$ and $\bx{C}{\pi}$, can be broken down to a formula that only contains modalities with a single coalition and first-order connectives.
    Also according to Lemma \ref{winequiv}, diamond and box modalities are equivalent, if only a single modality is attributed to them.
    Consequently, only formulas of the form $\dia{\{c\}}{\pi}$ need to be considered.
    If there is an equivalent \dGL formula for those, there is one for any \logic formula containing a diamond or a box modality.

    For the case $\dia{\{c\}}{\lnot \pi}$, the winning region is defined as $\{\omega \in \s \mid (\omega, c) \in \win{\pi}^\C\} = \{\omega \in \s \mid (\omega, c) \in \win{\pi}\}^\C$ which is the winning region of $\lnot \dia{\{c\}}{\pi}$. 
    $\dia{\{c\}}{\pi}$ is less complex than the original formula, as it contains one negation less.
    Therefore, the IH can be applied, yielding that the \dGL formula $\lnot (\dia{\{c\}}{\pi})^\flat$ has the same winning region.
    Similarly, $\dia{\{c\}}{\pi \land \psi}$ can be split at the conjunction into two less complex modalities.
    On those, the IH can be applied yielding that the \dGL formula $(\dia{\{c\}}{\pi})^\flat \land (\dia{\{c\}}{\psi})^\flat$ has an equivalent winning region.

    In the case $\dia{\{c\}}{P^\co}$, Lemma \ref{winequiv} shows that the formula is equivalent to $P$.
    This formula has one modality less, so it is structurally less complex. 
    Consequently, the IH can be applied to prove the claim.

    In the case $\dia{\{c\}}{\proj{P_1}{P_2}{P_3}}$, Lemma \ref{winequiv} shows that the formula is equivalent to $\bigwedge_{i\in c}P_i$.
    The formulas $P_i$ are less complex than the original one because they contain one modality and one goal constructor less.
    Consequently, the IH can be applied.
    As the rest of the connectives are from first-order logic, $\bigwedge_{i \in c}P_i^\flat$ is a \dGL formula with equivalent winning region.

    For the case of $\dia{\{c\}}{\mo{\alpha}{\pi}}$, it is proven that the \dGL equivalent of the formula is as follows:

    \todglgame*
    \begin{proof}
        \label{todglgameproof}
        The proof is conducted by structural induction over $\alpha$. Additional to the induction hypothesis from this structural induction, the induction hypothesis from the outer proof is used as well. To distinguish them, the outer loop induction hypothesis is named "IH1" and the induction hypothesis from this proof is named "IH2".
        \begin{itemize}
            \item For the assignment game, the definition of the semantics for $\dia{\{c\}}{\mo{x:=e}{\pi}}$ is $\{\omega \in \s \mid \exists \tilde{c} \in \{c\}: (\omega, \tilde{c}) \in \win{\mo{x:=e}{\pi}}\}$. 
            As there is only one possibility for $\tilde{c}$, this can be simplified to $\{\omega \in \s \mid (\omega, c) \in \win{\mo{x:=e}{\pi}}\}$.
            This simplification can always be done if there is only one coalition in the set.
            Therefore, it will be used directly without further notice in the future.
            The winning region of the assignment game is $\{(\omega, c) \in \s \times \co \mid (\omega_x^{\omega\win{e}}, c) \in \win{\pi}\}$.
            Inserting this in the semantics of the diamond modality yields $\{\omega \in \s \mid (\omega_x^{\omega\win{e}}, c) \in \win{\pi}\}$.
            Instead of asking whether the tuple $(\omega_x^{\omega\win{e}}, c)$ is in $\win{\pi}$, it can also be asked whether $\omega_x^{\omega\win{e}}$ is in $\win{\dia{{c}}{\pi}}$.
            That is because $\win{\dia{{c}}{\pi}}$ contains all states $\omega$ such that $(\omega, c)$ is in $\win{\pi}$.
            This step is also often reoccurring throughout the proof and will be used without further explanation.
            With this transformation, the set is now $\{\omega \in \s \mid \omega_x^{\omega\win{e}} \in \win{\dia{\{c\}}{\pi}}\}$.
            As $\dia{\{c\}}{\pi}$ contains less hybrid games than $\dia{\{c\}}{\mo{x:=e}{\pi}}$, so it is structurally less complex.
            Consequently, IH1 can be used on it:
            There exists a \dGL formula $(\dia{\{c\}}{\pi})^\flat$ with identical semantics.
            Therefore, the set can be transformed to $\{\omega \in \s \mid \omega_x^{\omega\win{e}} \in \win{(\dia{\{c\}}{\pi})^\flat}\}$ which is the semantics of the \dGL formula $\angel{x:=e}{(\dia{\{c\}}{\pi})^\flat}$.
            By definition, this equivalent to $\angel{(x:=e)\dc}{(\dia{\{c\}}{\pi})^\flat}$ which proves the claim.

            \item The definition of the semantics of $\dia{\{c\}}{\mo{?_i Q}{\pi}}$ is $\{\omega \in \s \mid (\omega,c) \in \win{\mo{?_iQ}{\pi}}\}$.
            If $i \in c$, the definition for the winning region of the test game expands to $\win{Q^\co} \cap \win{\pi}$.
            Instead of testing whether $(\omega, c)$ is in the winning region, it can also be tested whether $\omega$ is in $\win{\dia{\{c\}}{Q^\co}} \cap \win{\dia{\{c\}}{\pi}}$.
            $\win{\dia{\{c\}}{Q^\co}}$ is just $\win{Q}$.
            $Q$ and $\dia{\{c\}}{\pi}$ are less complex than the original formula, as they have one test game less, so by IH1 they have the same semantics as the \dGL formulas $Q^\flat$ and $(\dia{\{c\}}{\pi})^\flat$, respectively.
            Hence, the set can be transformed to $\win{Q^\flat} \cap \win{(\dia{\{c\}}{\pi})^\flat}$ which is equivalent to $\win{\angel{?Q^\flat}{(\dia{\{c\}}{\pi})^\flat}}$.
            By definition, this is the same as $\win{\angel{(?_iQ)\dc}{(\dia{\{c\}}{\pi})^\flat}}$ which proves the claim.

            If $i\not\in c$, the winning region for $\mo{?_i Q}{\pi}$ is $\win{Q^\co}^\C \cup \win{\pi}$.
            Similarly to the case $i\in c$, this can be transformed to $\win{Q^\flat} \cup \win{(\dia{\{c\}}{\pi})^\flat}$ which is the \dGL winning region for $\angel{(?Q^\flat)^d}{\pi^\flat}$.
            By definition, this is equivalent to $\win{\angel{(?_i Q)\dc}{(\dia{\{c\}}{\pi})^\flat}}$ which proves the claim.

            \item The definition of the semantics for $\dia{\{c\}}{\mo{\{x'=f(x) \&Q\}_i}{\pi}}$ is $\{\omega \in \s \mid (\omega,c) \in \win{\mo{\{x'=f(x)\}_i}{\pi}}\}$. 
            In the case of $i\in c$, the winning region for the continuous game reduces to $\{(\varphi(0), d) \in \s \times C_i\mid (\varphi(r), d) \in \win{\pi} \text{ for some } r\geq 0 \text{ with } \varphi \models x'=f(x) \land Q\}$.
            As only tuples are of interest where $c=d$, the semantics of the whole formula can be simplified to $\{\varphi(0) \in \s \mid \varphi(r) \in \win{\dia{\{c\}}{\pi}} \text{ for some } r \geq 0 \text{ with } \varphi \models x'=f(x) \land Q\}$.
            By IH1 $\win{\dia{\{c\}}{\pi}}$ is equivalent to $\win{(\dia{\{c\}}{\pi})^\flat}$ where $(\dia{\{c\}}{\pi})^\flat$ is a \dGL[] formula.
            Also by IH1, $\win{Q}$ is equivalent to the semantics of \dGL formula $Q^\flat$.
            Consequently, the term can be transformed to $\win{\angel{x'=f(x) \& Q^\flat}{(\dia{\{c\}}{\pi})^\flat}}$ which is equivalent to $\win{\angel{(\{x'=f(x)\&Q\}_i)\dc}{(\dia{\{c\}}{\pi})^\flat}}$ by definition.
            This proves the claim.

            In the case that $i\not\in c$, the winning region of the continuous game reduces to $\{(\varphi(0), d) \in \s \times C_i^\C \mid (\varphi(r), d) \in \win{\pi} \text{ for all } r\geq 0 \text{ with } \varphi \models x'=f(x) \land Q\}$.
            Similarly to the previous case, only coalitions where $c=d$ are of interest, so the semantics for the whole formula can be transformed to $\{\varphi(0) \in \s \mid \varphi(r) \in \win{\dia{\{c\}}{\pi}} \text{ for all } r\geq 0 \text{ with } \varphi \models x'=f(x) \land Q\}$.
            By IH1, $\win{\dia{\{c\}}{\pi}}$ is equivalent to $\win{(\dia{\{c\}}{\pi})^\flat}$, so the set is equivalent to $\win{\angel{\{x'=f(x)\&Q^\flat\}^d}{(\dia{\{c\}}{\pi})^\flat}}$.
            By definition, this can be transformed to $\win{\angel{(\{x'=f(x)\&Q\}_i)\dc}{(\dia{\{c\}}{\pi})^\flat}}$ which proves the claim.

            \item The semantics of $\dia{\{c\}}{\mo{\alpha;\beta}{\pi}}$ is $\{\omega \in \s \mid (\omega,c) \in \win{\mo{\alpha;\beta}{\pi}}\}$.
            By definition, the winning region of $\mo{\alpha;\beta}{\pi}$ is equivalent to $\win{\mo{\alpha}{\mo{\beta}{\pi}}}$.
            Now, instead of asking whether $(\omega,c) \in \win{\mo{\alpha}{\mo{\beta}{\pi}}}$, one can also ask whether $\omega \in \win{\dia{\{c\}}{\mo{\alpha}{\mo{\beta}{\pi}}}}$, i.e. the set can be transformed to $\win{\dia{\{c\}}{\mo{\alpha}{\mo{\beta}{\pi}}}}$.
            As the complexity of the hybrid games is less than in the original formula because these games contain one sequential game less, IH2 can be applied, yielding $\win{\angel{\alpha\dc}{(\dia{\{c\}}{\mo{\beta}{\pi}})^\flat}}$.
            Now, IH2 can be applied again, transforming the set to $\win{\angel{\alpha\dc}{\angel{\beta\dc}{(\dia{\{c\}}{\pi})^\flat}}}$.
            This is equivalent to the \dGL winning region of $\angel{\alpha\dc;\beta\dc}{(\dia{\{c\}}{\pi})^\flat}$.
            By definition this can then be transformed to $\win{\angel{(\alpha;\beta)\dc}{(\dia{\{c\}}{\pi})^\flat}}$ which proves the claim.

            \item The semantics for $\dia{\{c\}}{\mo{\alpha\cup_i\beta}{\pi}}$ is $\{\omega \in \s \mid (\omega,c) \in \win{\mo{\alpha\cup_i\beta}{\pi}}\}$. 
            In case $i\in c$, the winning region of the choice game can be reduced to $\win{\mo{\alpha}{\pi}} \cup \win{\mo{\beta}{\pi}}$.
            Instead of testing whether $(\omega,c)$ is in this set, one can alternatively test whether $\omega$ is in $\win{\dia{\{c\}}{\mo{\alpha}{\pi}}} \cup \win{\dia{\{c\}}{\mo{\beta}{\pi}}}$, i.e. the whole set can be simplified to just this.
            As the complexity of the hybrid games decreased, even though there are more modalities now, IH2 can be used, transforming the set to $\win{\angel{\alpha\dc}{(\dia{\{c\}}{\pi})^\flat}} \cup \win{\angel{\beta\dc}{(\dia{\{c\}}{\pi})^\flat}}$.
            Using the \dGL definition of the semantics yields $\win{\angel{\alpha\dc \cup \beta\dc}{(\dia{\{c\}}{\pi})^\flat}}$.
            With the definition of the $\dc$ operator, this can be transformed to $\win{\angel{(\alpha\cup_i\beta)\dc}{(\dia{\{c\}}{\pi})^\flat}}$ which proves the claim.

            In the case that $i\not\in c$, the definition of the choice game's winning region can be simplified to $\win{\mo{\alpha}{\pi}} \cap \win{\mo{\beta}{\pi}}$.
            As for the case above, instead of testing whether the tuple $(\omega,c)$ is in the winning region, one can also look at the states in $\win{\dia{\{c\}}{\mo{\alpha}{\pi}}} \cap \win{\dia{\{c\}}{\mo{\beta}{\pi}}}$. 
            By IH2, these can be transformed to $\win{\angel{\alpha\dc}{(\dia{\{c\}}{\pi})^\flat}} \cap \win{\angel{\beta\dc}{(\dia{\{c\}}{\pi})^\flat}}$.
            Using the definition of the \dGL semantics yields $\win{\angel{((\alpha\dc)^d\cup(\beta\dc)^d)^d}{(\dia{\{c\}}{\pi})^\flat}}$.
            By definition of the $\dc$ operator, this is equivalent to $\win{\angel{(\alpha\cup_i\beta)\dc}{(\dia{\{c\}}{\pi})^\flat}}$ which proves the claim.

            \item The semantics for $\dia{\{c\}}{\mo{\alpha^{*i}}{\pi}}$ is $\{\omega \in \s \mid (\omega, c) \in \win{\mo{\alpha^{*i}}{\pi}}\}$.
            In the case that $i \in c$, the winning region of the repetition game reduces to $\bigcap \{Z \in \s \times C_i \mid \win{\pi} \cup \win{\alpha}(Z) \subseteq Z\}$.
            As only tuples are considered where the coalition is $c$, it suffices that $Z$ is from $\s \times \{c\}$.
            Consequently, instead of testing whether $(\omega,c)$ is in the winning region, one can also check whether $\omega$ is in $\{Z \subseteq \s \mid \win{\dia{\{c\}}{\pi}} \cup \win{\dia{\{c\}}{\mo{\alpha}{\tilde{Z}^\co}}} \subseteq Z\}$ where $\tilde{Z}$ is the formula whose semantics is $Z$.
            As the second entry of the tuples in $Z$ is always the same and thus uninformative, it can be reduced away without losing information.
            Now, IH1 can be used on $\win{\dia{\{c\}}{\pi}}$ to obtain $\win{(\dia{\{c\}}{\pi})^\flat}$.
            On $\win{\dia{\{c\}}{\mo{\alpha}{\tilde{Z}}^\co}}$, IH2 can be used as $\alpha$ is a structurally less complex game than $\alpha^{*i}$, yielding $\win{\angel{\alpha\dc}{(\dia{\{c\}}{\tilde{Z}^\co})^\flat}}$.
            By Lemma \ref{winequiv}, $\dia{\{c\}}{\tilde{Z}^\co}$ is equivalent to $\tilde{Z}$.
            The winning region of $\tilde{Z}^\flat$ is equal to $\tilde{Z}$ by IH1, so the winning region of $\win{\angel{\alpha\dc}{\tilde{Z}^\flat}}$ is $\varsigma_{\alpha\dc}(Z)$.
            The set now corresponds to the winning region of the repetition game for Angel $\angel{(\alpha\dc)^*}{(\dia{\{c\}}{\pi})^\flat}$.
            By definition, this is the same as $\win{\angel{(\alpha^{*i})\dc}{(\dia{\{c\}}{\pi})^\flat}}$ which proves the claim.

            In the case of $i \not\in c$, the winning region of the repetition game simplifies to $\bigcup\{Z \subseteq \s \times C_i^\C \mid \win{\pi} \cap \win{\mo{\alpha}{Z}} \supseteq Z\}$.
            As in the previous case, $C_i^\C$ can be replaced by $\{c\}$ without losing information, so the whole set can be reduced to states instead of tuples.
            The result is $\bigcup\{Z \in \s \mid \win{\dia{\{c\}}{\pi}} \cap \win{\dia{\{c\}}{\mo{\alpha}{\tilde{Z^\co}}}} \supseteq Z\}$.
            By IH1, $\win{\dia{\{c\}}{\pi}}$ is equivalent to $\win{(\dia{\{c\}}{\pi})^\flat}$ and by IH2, $\win{\dia{\{c\}}{\mo{\alpha}{\tilde{Z^\co}}}}$ is equivalent to $\win{\angel{\alpha\dc}{(\dia{\{c\}}{\tilde{Z}^\co})^\flat}}$.
            Like in the case for $i\in c$, the latter winning region is equivalent to $\varsigma_{\alpha\dc}(Z)$.
            Consequently, the winning region corresponds to the winning region of demon's repetition game $\angel{(((\alpha\dc)^d)^*)^d}{(\dia{\{c\}}{\pi})^\flat}$.
            By definition, this is equivalent to $\win{\angel{(\alpha^{*i})\dc}{(\dia{\{c\}}{\pi})^\flat}}$  which proves the claim.
        \end{itemize}
    \end{proof}
\end{proofE}

\begin{table}[htb]
        \caption{Definition of the $\dc$ operator}
        \label{minuscdef}
        \begin{tabularx}{\textwidth}{r c l r c l}
        $(x:=e)\dc$ & $=$ & $x:=e$ & $(\{x'=f(x)\& Q\}_i)\dc$ & $=$ & $\begin{cases}
            \{x' = f(x) \& Q^\flat\} & \text{if } i \in c\\
            \{x' = f(x) \& Q^\flat\}^d & \text{else }
        \end{cases}$ \\
        $(?_i Q)\dc$ & $=$ & $\begin{cases}
            ?Q^\flat & \text{if } i \in c\\
            (?Q^\flat)^d & \text{else }
            \end{cases}$ & $(\alpha \cup_i \beta)\dc$ & $=$ & $\begin{cases}
            \alpha\dc \cup \beta\dc & \text{if } i \in c\\
            ((\alpha\dc)^d \cup (\beta\dc)^d)^d & \text{else }
        \end{cases}$ \\
        $(\alpha; \beta)\dc$ & $=$ & $\alpha\dc; \beta\dc$ & $(\alpha^{*i})\dc$ & $=$ & $\begin{cases}
            (\alpha\dc)^* & \text{if } i \in c\\
            (((\alpha\dc)^d)^*)^d & \text{else }
        \end{cases}$
    \end{tabularx}
    \end{table}

\section{Proof Calculus}
\label{proof_calculus}
For being able to make practical use of \logic, a proof calculus is introduced in this section.

The rules $\db$ and $\sdb$ define the dualities between box and diamond modality:
Axiom $\db$ specifies that pulling a negation inside a modality flips it.
This means, if no coalition in the set $C$ can win, then all of them must certainly end in the negation of the goal, i.e. they can win the game with the negated goal.
This axiom also shows that the conventional relationship between box and diamond operator holds.
Axiom $\sdb$ states that if $C=\{c\}$, box and diamond modality are equivalent, because there is only one possible coalition.

For all hybrid game constructs, there are two axioms.
These consider only diamond modalities with a single coalition associated.
More general rules can be derived, as proved in Lemma\,\ref{derivedRules}.
Depending on whether or not the coalition $c$ associated with the modality contains the player controlling the game, the game is decomposed differently into smaller parts.
For the repetition game there are two additional rules $\sFP$ and $\sind$, which allow one to get rid of the repetition game, unlike rules $\srep$ and $\srepn$ which only "unroll" the loop.
Additionally, the rule $\sMon$ expresses the monotonicity property of the game modality.
If $C$ contains both coalitions with and without the player in control, the modality can be split up using axiom $\spl$.
A corresponding axiom for the box modality can be derived from the calculus as proven in Lemma\,\ref{derived}.

Axiom $\xir$ states that the validity of the goal constructor can be proven by showing that (at least) one of the coalitions in $C$ fulfills their goal, i.e. the conjunction of its member's goals, eliminating the modality.
Another rule that allows one to get rid of the diamond modality is rule $\dco$.
This rule states that a state formula $P$ leveraged to a game formula using $^\co$ inside a modality, can be simplified to just $P$.
Rule $\dlor$ expresses that diamond modalities can be split up at disjunctions.
The corresponding axioms for the box modality can be derived using axiom $\db$.

Similar to the \dGL calculus \cite{Platzer15}, the calculus also contains a uniform substitution rule US which replaces all occurrences of predicate $q(\cdot)$ by \logic formula $Q(\cdot)$.
Note that the substitution is required to be \emph{admissible}, i.e. all variables $x$ that are replaced or occur in the replacement should not occur in the scope of a quantifier or modality binding $x$.

Additionally, the calculus includes all first-order logic rules.
The full proof calculus excluding the first-order rules and axioms, can be found in Fig. \ref{proof calculus}.

\begin{figure}[t]
    \begin{tabularx}{\textwidth}[t]{>{\color{blue}}l X >{\color{blue}}l}
        $\sas$ & $\dia{\{c\}}{\mo{x:=e}{\pi(x)}} \leftrightarrow \dia{\{c\}}{\pi(e)}$ \\
        $\ste$ & $\dia{\{c\}}{\mo{?_iQ}{\pi}} \leftrightarrow \dia{\{c\}}{Q^\co \land \pi}$ & $i \in c$\\
        $\sten$ & $\dia{\{c\}}{\mo{?_iQ}{\pi}} \leftrightarrow \dia{\{c\}}{\lnot Q^\co \lor \pi}$ & $i\not \in c$\\
        $\scon$ & $\dia{\{c\}}{\mo{\{x'=f(x)\}_i}{\pi}} \leftrightarrow \exists t{\geq} 0\, \dia{\{c\}}{\mo{x:=y(t)}{\pi}}$ & $i \in c$, $y' = f(y)$\\
        $\sconn$ & $\dia{\{c\}}{\mo{\{x'=f(x)\}_i}{\pi}} \leftrightarrow \forall t {\geq} 0\, \dia{\{c\}}{\mo{x:=y(t)}{\pi}}$ & $i\not\in c$, $y'=f(y)$\\
        $\sch$ & $\dia{\{c\}}{\mo{\alpha \cup_i \beta}{\pi}} \leftrightarrow \dia{\{c\}}{\mo{\alpha}{\pi} \lor \mo{\beta}{\pi}}$ & $i \in c$\\
        $\schn$ & $\dia{\{c\}}{\mo{\alpha \cup_i \beta}{\pi}} \leftrightarrow \dia{\{c\}}{\mo{\alpha}{\pi} \land \mo{\beta}{\pi}}$ & $i \not\in c$\\
        $\sseq$ & $\dia{\{c\}}{\mo{\alpha;\beta}{\pi}} \leftrightarrow \dia{\{c\}}{\mo{\alpha}{\mo{\beta}{\pi}}}$\\
        $\srep$ & $\dia{\{c\}}{\mo{\alpha^{*i}}{\pi}} \leftrightarrow \dia{\{c\}}{\pi \lor \mo{\alpha}{\mo{\alpha^{*i}}{\pi}}}$ & $i \in c$\\
        $\srepn$ & $\dia{\{c\}}{\mo{\alpha^{*i}}{\pi}} \leftrightarrow \dia{\{c\}}{\pi \land \mo{\alpha}{\mo{\alpha^{*i}}{\pi}}}$ & $i \not \in c$\\
        $\sFP$ & \AxiomC{$\dia{\{c\}}{\pi \lor \mo{\alpha}{\psi}} \rightarrow \psi$}
        \UnaryInfC{$\dia{\{c\}}{\mo{\alpha^{*i}}{\pi} \rightarrow \psi}$}
        \DisplayProof
        & $i\in c$\\
        $\sind$ & \AxiomC{$\dia{\{c\}}{\pi \rightarrow \mo{\alpha}{\pi}}$}
        \UnaryInfC{$\dia{\{c\}}{\pi \rightarrow \mo{\alpha^{*i}}{\pi}}$} 
        \DisplayProof
        & $i \not \in c$\\
        $\sMon$ & \AxiomC{$\dia{\{c\}}{\pi \rightarrow \psi}$}
        \UnaryInfC{$\dia{\{c\}}{\mo{\alpha}{\pi} \rightarrow \mo{\alpha}{\psi}}$}
        \DisplayProof \\
        $\spl$ &  $\dia{C \cup \{c\}}{\pi} \leftrightarrow \dia{C}{\pi} \lor \dia{\{c\}}{\pi}$ & \\
        $\xir$ & $\dia{C}{\proj{P_1}{P_2}{P_3}} \leftrightarrow \bigvee_{c \in C} \bigwedge_{i\in c} P_i$ & \\
        $\dco$ & $\dia{C}{P^\co} \leftrightarrow P$ & \\
        $\dlor$ & $\dia{C}{\pi \lor \psi} \leftrightarrow \dia{C}{\pi} \lor \dia{C}{\psi}$ & \\
        $\db$ & $\lnot \dia{C}{\lnot\pi} \leftrightarrow \bx{C}{ \pi}$ & \\
        $\sdb$ & $\dia{\{c\}}{\pi} \leftrightarrow \bx{\{c\}}{\pi}$ & \\
        US & \AxiomC{$P$}\UnaryInfC{$P_{q(\cdot)}^{Q(\cdot)}$}\DisplayProof
    \end{tabularx}
    \caption{Proof calculus for \logic}
    \label{proof calculus}
\end{figure}

One important property of any proof calculus is soundness. 
Soundness means that anything that can be proven in the proof calculus is actually valid, i.e. true in all states. 
Otherwise, the proof calculus would be practically useless.
Therefore, soundness of the \logic proof calculus is proven in the following theorem.

\begin{theoremE}[Soundness]
The \logic proof calculus is sound.
\end{theoremE}
\begin{proofE}
    The proof is done by showing that the semantics for the left side and the right side of the axioms match. 
    The soundness of the proof rules is shown by proving that the semantics of the premise is a subset of the semantics of the conclusion.
    \begin{enumerate}
        \item $\sas$:\\
        The winning region of $\mo{x:=e}{\pi(x)}$ is equal to the winning region of $\pi(e)$: 
        Instead of $\{(\omega, c) \in \s \times \co \mid (\omega_x^{\omega\win{e}}, c) \in \win{\pi(x)}\}$ one can directly write $\{(\omega, c) \in \s \times \co \mid (\omega, c) \in \win{\pi(e)}\}$, as evaluating whether $\pi(x)$ is true in a state where the value of $x$ has been replaced with the value of $e$ in state $\omega$, is the same as directly evaluating $\pi(e)$ in the state $\omega$.
        Consequently, one can replace $\win{\mo{x:=e}{\pi(x)}}$ with $\win{\pi(e)}$ in the semantics of $\dia{\{c\}}{\mo{x:=e}{\pi(x)}}$, yielding the semantics of $\dia{\{c\}}{\pi(e)}$.

        \item $\ste$ $\sten$:\\
        The semantics of $\dia{\{c\}}{\mo{?_iQ}{\pi}}$ is $\{\omega \in \s \mid (\omega, c) \in \{(\nu, d) \in \win{Q^\co} \cap \win{\pi} \mid i \in d\} \cup \{(\nu, d) \in \win{Q^\co}^\C \cup \win{\pi} \mid i \not \in d\}\}$.
        In the case that $i \in c$, the winning region of $\mo{?_i Q}{\pi} $ can be reduced to $\win{Q^\co}\cap \win{\pi}$.
        Removing the set $\{(\nu, d) \in \win{Q^\co}^\C \cup \win{\pi} \mid i \not \in d\}$ will also not change anything, as none of the tuples $(\omega, c)$ lie in this set.
        This simplification directly yields $\{\omega \in \s \mid (\omega, c) \in \win{Q^\co} \cap \win{\pi}\}$ which is the semantics of $\dia{\{c\}}{Q^\co \land \pi}$.

        The proof for axiom $\sten$ works very similar to the one for $\ste$ but as now $i\not\in c$ is assumed, the winning region can be simplified to $\{\omega \in \s \mid (\omega, c) \in \{(\nu, d) \in \win{Q}^\C \cup \win{\pi} \mid i\not \in d\}\} = \win{\dia{C}{\lnot Q \lor \pi}}$.

        \item $\scon$, $\sconn$:\\
        For axiom $\scon$, by definition, $\win{\exists t{\geq} 0\dia{\{c\}}{ \mo{x:=y(t)}{\pi}}} = \{\omega \in \s \mid \omega_t^r \in \{\nu \in \s \mid (\nu,c)\in \{(\eta, d) \in \s \times \co \mid (\eta_x^{\eta\win{y(t)}}, d) \in \win{\pi}\}\}\text{ for some }r\}$.
        Contracting these yields $\{\omega \in \s \mid ((\omega_t^r)_x^{\omega_t^r\win{y(t)}}, c) \in \win{\pi} \text{ for some }r\}$.
        Define now the function $\varphi(s) = (\omega_t^s)_x^{\omega_t^s\win{y(t)}}$ which maps a real $s$ to a state.
        This function fulfills $\varphi \models x'=f(x)$ because of the assumption that $y'=f$.
        Expanding the winning region of the other side of the axiom yields $\{\omega \in \s \mid (\omega, c) \in \{(\varphi(0), d) \in \s \times C_i \mid (\varphi(r), d) \in \win{\pi} \text{ for some } r\geq 0 \text{ with }\varphi \models x'=f(x)\} \cup \{(\varphi(0), d) \in \s \times C_i^\C \mid (\varphi(r), d) \in \win{\pi} \text{ for all } r\geq 0 \text{ with }\varphi \models x'=f(x)\}\}$.
        As known from the assumption $i \in c$, so all tuples with coalitions from $C_i^\C$ will be sorted out.
        Therefore, the set can be simplified to $\{\varphi(0) \in \s \mid  (\varphi(r), c) \in \win{\pi} \text{ for some }r \geq 0 \text{ with } \varphi\models x'=f(x)\}$. 
        It is now easy to see that the winning region of the right-hand side of the axiom is a subset of this set.
        The other inclusion holds because the solution $x'=f(x)$ is unique as proven in \cite{Platzer10}.
  
        The proof for axiom $\sconn$ works pretty similar to the one for the previous axiom.
        But as now $i \not\in c$, the winning region for the left-hand side of the axiom can be simplified to $\{\varphi(0) \in \s \mid (\varphi(r),c) \in \win{\pi} \text{ for all }r \geq 0 \text{ with } \varphi\models x'=f(x)\}$ which matches the winning region of the right-hand side of the axiom. 

        \item $\sch$, $\schn$:\\
        For axiom $\ch$, the winning region of the left-hand side is $\{\omega \in \s \mid (\omega,c) \in (\win{\alpha}(\win{\pi}) \cap \win{\beta}(\win{\pi})) \cup (\{(\nu, d) \in \win{\alpha}(\win{\pi}) \mid i \in d\} \cup \{(\nu, d) \in \win{\beta}(\win{\pi}) \mid i \in d\})\}$.
        As assumed, $i\in c$, so requiring $i \in d$ for $\{(\nu, d) \in \win{\alpha} \mid i \in d\}$ and $\{(\nu, d) \in \win{\beta} \mid i \in d\}$ is superfluous.
        Therefore, the set can be simplified to $\{\omega \in \s \mid (\omega,c) \in \win{\alpha}(\win{\pi}) \cup \win{\beta}(\win{\pi})\}$ which is the winning region of the right-hand side of the axiom.

        For the axiom $\schn$ it is assumed that $i \not\in c$. Therefore, the part $\{(\nu, d) \in \win{\alpha}(\win{\pi}) \mid i \in d\} \cup \{(\nu, d) \in \win{\beta}(\win{\pi}) \mid i \in d\}$ can be simplified away from the winning region, as $c$ does not contain $i$.
        This yields $\{\omega \in \s \mid (\omega, c) \in \win{\alpha}(\win{\pi}) \cap \win{\beta}(\win{\pi})\}$ which is the winning region of the right-hand side.

        \item $\sseq$:\\
        The axiom can be proven straightforwardly by using the definition of the semantics for the sequential game on the winning region of the left-hand side.
        This directly yields the winning region of the right-hand side.

        \item $\srep$, $\srepn$:\\
        For axiom $\srep$, the winning region of the left-hand side expands to $\{\omega \in \s \mid (\omega,c) \in \bigcap\{Z \subseteq \s \times C_i \mid \win{\pi} \cup \win{\alpha}(Z) \subseteq Z\} \cup \bigcup\{Z \subseteq \s \times C_i^\C \mid \win{\pi} \cap \win{\alpha}(Z) \supseteq Z\}\}$.
        As $i \in c$ is assumed, this can be simplified to $\{\omega \in \s \mid (\omega,c) \in \bigcup\{Z\subseteq \s \times C_i \mid \win{\pi} \cup \win{\alpha}(Z) \subseteq Z\}\}$.
        The remaining fixpoint can alternatively be described with the fixpoint equation as $\win{\pi} \cup \win{\alpha}(\win{\alpha^{*i}}(\win{\pi}))$, yielding the winning region of the right-hand side.

        The proof for axiom $\srepn$ works similar to the previous one.
        Using the assumption $i \not \in c$ now the other fixpoint can be eliminated.
        The remaining fixpoint can be rewritten, using the fixpoint property, to $\win{\pi} \cap \win{\alpha}(\win{\alpha^{*i}}(\win{\pi}))$, yielding the winning region of the right-hand side.

        \item $\sFP$:\\
        Assume that the premise  is valid. 
        Its semantics is $\win{\dia{C}{\pi \lor \mo{\alpha}{\psi}}} = \{\omega \in \s \mid (\omega, c) \in \win{\pi} \cup \win{\alpha}(\win{\psi}) \cup \win{\psi}^\C\}$.
        From the implication, it is also known that $\win{\pi} \cup \win{\alpha}(\win{\psi}) \subseteq \win{\psi}$.
        Consequently, $\win{\psi}$ is a pre-fixpoint, i.e. it fulfills the equation for the fixpoint only with inequality. 
        In this case, $\psi$ is greater than the fixpoint.
        The least fixpoint fulfilling this equation is $\bigcap\{Z\subseteq \s \times C_i \mid \win{\pi} \cup \win{\alpha}(Z) \subseteq Z\}$ (as the rule assumes $i \in c$) which is consequently also a subset of $\win{\psi}$. 
        Inside the diamond, the least fixpoint can be replaced by $\bigcap\{Z\subseteq \s \times C_i \mid \win{\pi} \cup \win{\alpha}(Z) \subseteq Z\} \cup \bigcup\{Z \subseteq \s \times C_i^\C \mid \win{\pi} \cap \win{\alpha}(Z) \supseteq Z\} = \win{\alpha^{*i}}(\pi)$, as this does not more tuples, because $c \not \in C_i^\C$.
        Therefore, $\dia{C}{\mo{\alpha^{i*}}{\pi}\rightarrow \psi}$ is also valid which proves the rule.

        \item $\sind$:\\
        Assume that the premise is valid.
        From the implication, it is known that $\win{\pi} \subseteq \win{\alpha}(\win{\pi})$.
        Intersecting both sides with $\win{\pi}$ will not change anything.
        But now, it can be seen from the result $\win{\pi} \subseteq \win{\pi} \cap \win{\alpha}(\win{\pi})$ that $\win{\pi}$ is a pre-fixpoint. 
        The greatest such fixpoint is $\bigcup\{Z \subseteq \s \times C_i^\C \mid \win{\pi} \cap \win{\alpha}(Z) \supseteq Z\}$, as $C \subseteq C_i^\C$ is assumed.
        Therefore, this fixpoint is also greater than $\win{\pi}$.
        Inside the diamond, the fixpoint can be enlarged to $\bigcap\{Z \subseteq \s \times C_i \mid \win{\pi} \cup \win{\alpha}(Z) \subseteq Z\} \cup \bigcup\{Z \subseteq \s \times C_i^\C \mid \win{\pi} \cap \win{\alpha}(Z) \supseteq Z\} = \win{\alpha^{*i}}(\win{\pi})$ without changing anything because $c \not \in C_i$.
        Consequently, the conclusion of the rule is also valid, which proves the claim.

        \item $\sMon$:\\
        This proof rule holds because of the monotonicity proven in Lemma \ref{mon}.

        \item $\spl$: \\
        The semantics of $\dia{C \cup \{c\}}{\pi}$ is $\{\omega \in \s \mid \exists d \in (C \cup \{c\}): (\omega, d) \in \win{\pi}\}$. 
        This is equivalent to asking whether either there exists a coalition in $C$ that fulfills the condition or there exists a coalition in $\{c\}$ that fulfills it, i.e. the existential quantifier can be split in two:
        \[\{\omega \in \s \mid \exists d \in C: (\omega, d) \in \win{\pi} \lor \exists d\in \{c\}: (\omega, d) \in \win{\pi}\}\]
        Now, the set can also be split in two, yielding $\{\omega \in \s \mid \exists d \in C: (\omega, d) \in \win{\pi}\} \cup \{\omega \in \s \mid \exists d \in \{c\}: (\omega, d)\in \win{\pi}\}$.
        This is exactly the semantics of $\dia{C}{\pi} \lor \dia{\{c\}}{\pi}$ which proves the axiom.
        
        \item $\xir$:\\
        The semantics for the left-hand side of the axiom is $\{\omega \in \s \mid \exists c \in C: (\omega, c) \in \{(\nu, d) \in \s \times \co \mid \omega \in \bigcap_{i \in c} \win{P_i}\}\}$.
        This can be simplified to $\{\omega \in \s \mid \exists c \in C: \omega \in \bigcap_{i\in c} \win{P_i}\}$.
        As the existential quantifier ranges over a finite set $C$, it can also be regarded as a disjunction over all possibilities: $\bigcup_{c \in C}\bigcap_{i\in c}\win{P_i}$.
        This is the semantics of the right-hand side of axiom, so the claim is proven.

        \item $\dco$:\\
        The semantics of the left-hand side of the axiom is $\{\omega \in \s \mid \exists c \in C: (\omega, c) \in \win{P} \times \co\}$. 
        Because the cross product of $\win{P}$ with $\co$ is regarded, there always exists a $c \in C$ such that $(\omega, c)$ in $\win{P} \times \co$, as long as $\omega$ is in $\win{P}$.
        Consequently, the set can be simplified to $\win{P}$ which is the semantics of the left-hand side of the axiom.

        \item $\dlor$:\\
        The semantics of the left-hand side of the axiom is $\{\omega \in \s \mid \exists c \in C: (\omega, c) \in \win{\pi} \cup \win{\psi}\}$.
        Existential quantifiers can be split at disjunctions, so this is equivalent to $\{\omega, \in \s \mid \exists c \in C: (\omega, c) \in \win{\pi} \lor \exists c \in C: (\omega, c) \in \win{\psi}\}$.
        Separating this into two sets yields $\{\omega \in \s \mid \exists c \in C: (\omega, c) \in \win{\pi}\} \cup \{\omega \in \s \mid \exists c \in C. (\omega, c) \in \win{\psi}\}$ which is the semantics of the right-hand side of the axiom.

        \item $\db$:\\
        The semantics of the left-hand side of the axiom is $\{\omega \in \s \mid \exists c \in C: (\omega, c) \in \win{\pi}^\C\}^\C$.
        To obtain the complement set, the constraint in the set is negated, yielding $\{\omega \in \s \mid \forall c \in C: (\omega, c) \in \win{\pi}\}$ which is the semantics of the right-hand side of the axiom.

        \item $\sdb$:\\
        This is a direct consequence of Lemma \ref{winequiv}.

        \item US:\\
        Standard soundness proofs for US generalize to \logic \cite{Church1956}.
        The proof presented here is adapted from \cite{Platzer15}.
        Assume that $P$ is valid, and the uniform substitution is admissible.
        Without loss of generality, assume that $q$ does not occur in $Q(\cdot)$.
        Otherwise, rename $q$ to $r$ and then use US again to replace it by $q$.
        Consider now some interpretation $J$.
        As $q$ does not occur in $P_{q(\cdot)}^{Q(\cdot)}$, the value $J(q)$ does not influence its semantics.
        Then, define interpretation $I$ which is similar to $J$, except that $\win{q(x)}^I = \win{Q(x)}^J$.
        Since $q$ does not occur in $Q(x)$, it also holds that $\win{q(x)}^I = \win{Q(x)}^I$ for all values of $x$.
        Therefore, $I \models \forall x (q(x) \leftrightarrow Q(x))$.
        For any particular occurrence $q(u)$, $I \models q(u) \leftrightarrow Q(u)$.
        Equivalents can be substituted as can be derived from the monotonicity rule, so this equivalence implies $I \models P \leftrightarrow P_{q(u)}^{Q(u)}$.
        As this is possible for any $u$ and any interpretation $J$, $I \models P \leftrightarrow P_{q(\cdot)}^{Q(\cdot)}$ is valid.

    \end{enumerate}
\end{proofE}

Until now, all rules in the proof calculus only handle modalities with a single coalition.
In practice one would rather work with sets containing more than just one coalition to avoid handling all cases separately.
Luckily, corresponding rules for sets of coalitions can be derived from the proof calculus.
So, for example, rule $\sch$ can be generalized to

\begin{tabularx}{0.7\textwidth}[t]{>{\color{blue}}l l >{\color{blue}}l}
    $\ch$ & $\dia{C}{\mo{\alpha \cup_i \beta}{\pi}} \leftrightarrow \dia{\alpha}{\pi} \cup \dia{\beta}{\pi}$ & $C \subseteq C_i$\\
\end{tabularx}\\
Instead of requiring $i$ to be in $c$, it is now required that $i$ is in any coalition $c \in C$, or in other words, $C\subseteq C_i$.
The same rule but for box modality will also be derived.
Additionally, the axioms $\aspl$ and $\asplb$ are derived from the axiom $\spl$.
These axioms break the whole set $C$ into single coalition.
These axioms will be helpful for proving relative completeness of the proof calculus later.
The rest of the rules are derivable box versions of axioms where only the version for the diamond modality is included in the proof calculus.

\begin{lemmaE}[Derived rules]
    \label{derived}
    For every axiom of the form $\dia{\{c\}}{\pi} \leftrightarrow \dia{\{c\}}{\pi'}$, the axioms
    \[\dia{C}{\pi} \leftrightarrow \dia{C}{\pi'}\]
    \[\bx{C}{\pi} \leftrightarrow \bx{C}{\pi'}\]
    are derivable.
    Generalizations of rules with a similar form are derivable too.
    Additionally, the rules

    \begin{tabularx}{0.7\textwidth}{>{\color{blue}}l X >{\color{blue}}l}
        $\splb$ & $\bx{C\cup \{c\}}{\pi} \leftrightarrow \bx{C}{\pi} \land \bx{\{c\}}{\pi}$\\
        $\aspl$ & $\dia{C}{\pi} \leftrightarrow \bigvee_{c\in C} \dia{\{c\}}{\pi}$\\
        $\asplb$ & $\bx{C}{\pi} \leftrightarrow \bigwedge_{c \in C} \bx{\{c\}}{\pi}$\\
        $\bland$ & $\bx{C}{\pi \land \psi} \leftrightarrow \bx{C}{\pi} \land \bx{C}{\psi}$\\
    \end{tabularx}\\
    can be derived.
    A full list of all derived rules and axioms can be found in Figure \ref{derivedRules} in the appendix.
\end{lemmaE}
\begin{proofE}
    \begin{enumerate}
        \item $\splb$:\\
        Using rule $\db$, $\bx{C \cup \{c\}}{\pi}$ can be transformed to $\lnot\dia{C\cup \{c\}}{\lnot \pi}$.
        By rule $\spl$, this is equivalent to $\lnot (\dia{C}{\lnot \pi}) \lor \dia{\{c\}}{\lnot \pi}$.
        Using de Morgan and $\db$ again yields $\bx{C}{\pi} \land \bx{\{c\}}{\pi}$ which proves the axiom.

        \item $\aspl$:\\
        On $\dia{C}{\pi}$ use rule $\spl$ for some $c\in C$ to obtain $\dia{C\setminus\{c\}}{\pi} \lor \dia{\{c\}}{\pi}$.
        Repeatedly use rule $\spl$, until only diamond modalities with one coalition are left.
        This yields $\bigvee{c\in C}\dia{\{c\}}{\pi}$.

        \item $\asplb$:\\
        Using rule $\db$ on $\bx{C}{\pi}$ yields $\lnot \dia{C}{\lnot \pi}$.
        This is equivalent to $\lnot (\bigvee_{c \in C} \dia{\{c\}}{\lnot \pi})$ by rule $\aspl$.
        Applying de Morgan and $\db$ again, transforms this to $\bigwedge_{c \in C} \bx{\{c\}}{\pi}$ which proves the axiom.

        \item $\xirb$:\\
        First, the rule $\asplb$ is used on $\bx{C}{\proj{P_1}{P_2}{P_3}}$, yielding $\bigwedge_{c\in C} \bx{\{c\}}{\proj{P_1}{P_2}{P_3}}$.
        By rule $\sdb$, all box modalities can be turned into diamond modalities.
        On these, one can now use rule $\xir$ to obtain $\bigwedge_{c\in C} \bigwedge_{i \in c} P_i$.

        \item $\bland$:\\
        Using rule $\asplb$, $\bx{C}{P^\co}$ can be transformed to $\bigwedge_{c \in C} \bx{\{c\}}{P^\co}$.
        By rule $\db$, this is equivalent to $\bigwedge_{c\in C} \dia{\{c\}}{P^\co}$.
        As $\dia{\{c\}}{P^\co}$ is the same as $P$ by rule $\dco$, the formula can be simplified to $P$.

        \item Axioms and rules for hybrid games:\\
        To prove axiom $\te$, first disassemble the diamond using axiom $\aspl$ to obtain $\bigvee_{c\in C} \dia{\{C\}}{\mo{?_iQ}{\pi}}$.
        Then, use rule $\ste$ on each part of the disjunction, yielding $\bigvee_{c\in C} \dia{\{c\}}{Q^\co \land \pi}$.
        The rule is applicable on each part because $C \subseteq C_i$ is assumed, i.e. $i \in c$ holds for any $c \in C$.
        After that, the parts can be reassembled to $\dia{C}{Q^\co \land \pi}$ with $\aspl$.
        All other axioms and rules can be derived similarly.
    \end{enumerate}
\end{proofE}

\begin{example}
    With the rules of the proof calculus and the derived rules, it can now be proven that the car driver from Example \ref{ex} can actually win the game and get enough gas to drive to the next town.
    See Figure \ref{proof_ex} for the proof tree.
    First, the diamond modality is split into two parts.
    One part includes all coalitions containing Player 2, the filling station assistant, while the other part contains all coalitions without them.
    This is necessary to make further axioms applicable, as Player 2 controls the game and axioms require that the set in the modality either only contains coalitions with or without Player 2.
    Then, the game is broken down further, using the choice axiom $\ch$ and the continuous axiom $\con$.
    After simplifying with axioms $\as$ and $\xir$, the proof can be closed by real arithmetic.

    \begin{figure}
        \begin{small}
            \begin{prooftree}
            \AxiomC{$*$}
            \LeftLabel{$\mathbb{R}$}
            \UnaryInfC{$t_c = 0, t_m = 0 \vdash \begin{aligned}[t]
            &\exists t {\geq} 0 ((t = 5 \land \top) \lor (t = 5 \land \top \land t_m=5))\\
            &\lor \exists t{\geq} 0 ((t_c = 5 \land \top) \lor (t_c = 5 \land \top \land t =5))
            \end{aligned} $}
            \LeftLabel{$\as$, $\xir$}
            \UnaryInfC{$t_c = 0, t_m = 0 \vdash \begin{aligned}[t]
            &\exists t{\geq}0 \dia{C_1 \cap C_2}{\mo{t_c:= t}{\proj{t_c=5}{\top}{t_m=5}}}\\
            &\lor \exists t {\geq} 0 \dia{C_1 \cap C_2}{\mo{t_m := t}{\proj{t_c=5}{\top}{t_m=5}}}
            \end{aligned}$}
            \LeftLabel{$\dlor$, $\con$}
            \UnaryInfC{$t_c = 0, t_m = 0 \vdash \begin{aligned}[t]
            &\dia{C_1 \cap C_2}{\mo{\{t_c'=1\}_2}{\proj{t_c=5}{\top}{t_m=5}}\\
            &\lor \mo{\{t_m'=1\}_2}{\proj{t_c=5}{\top}{t_m=5}}}
            \end{aligned}  $}
            \LeftLabel{wR, $\ch$}
            \UnaryInfC{$t_c = 0 , t_m = 0 \vdash \begin{aligned}[t]
            &\dia{C_1 \cap C_2}{\begin{aligned}[t]
            \mo{\{t'_c = 1\}_2 \cup_2 \{t_m' = 1\}_2\\}{\proj{t_c=5}{\top}{t_m=5}}},
            \end{aligned}\\
            &\dia{C_1 \cap C_2^\C}{\begin{aligned}[t]
            \mo{\{t'_c = 1\}_2 \cup_2 \{t_m' = 1\}_2\\}{\proj{t_c=5}{\top}{t_m=5}}
            \end{aligned}}
            \end{aligned} $}
            \LeftLabel{$\spl$, ${\lor}R$}
            \UnaryInfC{$t_c = 0, t_m = 0 \vdash \dia{C_1}{\mo{\{t'_c = 1\}_2 \cup_2 \{t_m' = 1\}_2}{\proj{t_c=5}{\top}{t_m=5}}}$}
        \end{prooftree}
        \end{small}
        \caption{Proof for Example \ref{ex}}
        \label{proof_ex}
    \end{figure}
    
    As one can see, a case distinction is only necessary to separate coalitions with and without Player 2.
    Any further case distinction is completely unnecessary.
    This saves considerable effort compared to a complete case distinction.
\end{example}

Another important property is completeness. Completeness means that any valid formula can also be proven in the proof calculus.
In the case of \logic, only relative completeness can be proven as \logic is equivalent to \dGL (Lemma \ref{dglequivalence}) which is only relatively complete.

\begin{definition}[Expressiveness \cite{Platzer15}]
    A logic $L$ is called \emph{expressive} (for \logic), if for every formula $P$, there exists an equivalent formula $P^\flat$ of $L$, i.e. $\models P \leftrightarrow P^\flat$.
    The logic $L$ is called \emph{differentially expressive}, if it is expressive and all equivalences of the form $\dia{C}{\mo{\{x'=f(x)\}_i}{\pi}} \leftrightarrow (\dia{C}{\mo{\{x'=f(x)\}_i}{\pi}})^\flat$ and $\bx{C}{\mo{\{x'=f(x)\}_i}{\pi}} \leftrightarrow (\bx{C}{\mo{\{x'=f(x)\}_i}{\pi}})^\flat$ are provable in the proof calculus.
    It is assumed that the logic $L$ is closed under first-order logic.
\end{definition}

\begin{definition}[Relative completeness \cite{Platzer15}]
    A logic is \emph{complete relative} to a differentially expressive logic $L$, if every valid formula can be proved in the calculus from $L$ tautologies.
\end{definition}

The relative completeness proof is inspired by Abou El Wafa's and Platzer's proof for equivalence of right-linear game logic and modal $\mu$-calculus \cite{DBLP:conf/lics/AbouElWafaP24}.
In order to use the relative completeness of \dGL to prove relative completeness for \logic, an equivalent translation $^\flat$ from \logic to \dGL and an equivalent translation $^\sharp$ from \dGL to \logic are defined, since \dGL and \logic do not share the same syntax.
If any \logic formula $P$ is equivalent to $P^{\flat\sharp}$, and any \dGL formula $P$ is equivalent to $P^{\sharp\flat}$, then these transformations return equivalent formulas.
As $P^{\flat\sharp}$ and $P^{\sharp\flat}$ are in the same logic as the original formulas, the equivalence to $P$ can be proven in the respective proof calculus.
In contrast to Lemma \ref{dglequivalence}, the soundness of the translation is not proven semantically but \emph{syntactically} because relative completeness is shown on a \emph{syntactical} level for the proof calculus.
\begin{lemmaE}[Provable inverses]
    \label{provable_inverses}
    \begin{enumerate}
        \item $\logic \vdash P \leftrightarrow P^{\flat\sharp}$ holds for any \logic formula $P$ and
        \item $\dGL \vdash P \leftrightarrow P^{\sharp\flat}$ holds for any \dGL formula $P$
    \end{enumerate}  

    where $^\flat$ transforms \logic formulas to \dGL formulas and is defined recursively as follows:

    \begin{tabularx}{\textwidth}{l l l}
        $p(x)^\flat = p(x)$ & $(\forall x P)^\flat = \forall x P\flat$ & $(\dia{\{c\}}{P^\co})^\flat = P^\flat$\\
        $(e\geq\tilde{e})^\flat = e\geq \tilde{e}$ & $(\exists x P)^\flat = \exists x P\flat$ & $(\dia{\{c\}}{\mo{\alpha}{\pi}})^\flat = \angel{\alpha\dc}{(\dia{\{c\}}{\pi})^\flat}$\\
        $(\lnot P)^\flat = \lnot P^\flat$ & $(\dia{\{c\}}{\lnot \pi})^\flat = \lnot (\dia{\{c\}}{\pi})^\flat$ & $(\dia{C}{\pi})^\flat = \bigvee_{c\in C} (\dia{\{c\}}{\pi})^\flat$\\
        $(P \land Q) = P^\flat \land Q^\flat$ & $(\dia{\{c\}}{\proj{P_1}{P_2}{P_3}})^\flat = \bigwedge_{i\in c} P_i^\flat$ & $(\bx{C}{\pi})^\flat = \bigwedge_{c\in C} (\dia{\{c\}}{\pi})^\flat$\\
        \multicolumn{2}{l}{$(\dia{\{c\}}{\pi_1 \land \pi_2})^\flat = (\dia{\{c\}}{\pi_1})^\flat \land (\dia{\{c\}}{\pi_2})^\flat$}
    \end{tabularx}
    The function $\dc$ transforms \logic games with three players to \dGL games with two players by grouping the players in coalition $c$ into one player.
    The same function as for the semantic equivalence proof in Lemma \ref{dglequivalence} is used which can be found in Table \ref{minuscdef}.

    The function $^\sharp$ transforms \dGL formulas to \logic formulas and is recursively defined as:
    \begin{tabularx}{\textwidth}{l l l}
        $p(x)^\flat = p(x)$ & $(\forall x P)^\sharp = \forall x P^\sharp$& $(\angel{\alpha}{P})^\sharp = \dia{\{1, 2\}}{\mo{\alpha^{+\{1,2\}}}{\proj{P^\sharp}{P^\sharp}{\lnot P^\sharp}}}$\\
        $(e\geq \tilde{e})^\sharp = e \geq \tilde{e}$ & $(\exists x P)^\sharp = \exists x P^\sharp$ & $(\demon{\alpha}{P})^\sharp = \lnot \dia{\{\{1,2\}\}}{\mo{\alpha^{+\{1,2\}}}{\proj{\lnot P^\sharp}{\lnot P^\sharp}{P^\sharp}}}$\\
        $(\lnot P)^\sharp = \lnot P^\sharp$ & $(\exists x P)^\sharp = \exists x P^\sharp$ \\
    \end{tabularx}
    To lift the two-player games to three-player games, Angel is made into a team of players 1 and 2 while Demon corresponds to player 3. 
    Games are distributed as such that all games that belong to Angel now belong to player 1 and games that belong to Demon now belong to player 3 using the function $^{+\{1,2\}}$:

    \begin{tabularx}{\textwidth}{l l}
        $(x:=e)^{+\{1,2\}} = x:=e$ & $((?Q)^d)^{+\{1,2\}} = ?_3 Q^\sharp$\\
        $(?Q)^{+\{1,2\}} = ?_1 Q^\sharp$ & $(\{x'=f(x) \& Q\}^d)^{+\{1,2\}} = \{x'=f(x)\& Q^\sharp\}_3$\\
        $(\{x' = f(x) \& Q\})^{+\{1,2\}} = \{x'=f(x) \& Q^\sharp\}_1$ & $((\alpha^d \cup \beta^d)^d)^{+\{1,2\}} = \alpha^{+\{1,2\}} \cup_3 \beta^{+\{1,2\}}$\\
        $(\alpha \cup \beta)^{+\{1,2\}} = \alpha^{+\{1,2\}} \cup_1 \beta^{+\{1,2\}}$ & $(((\alpha^d)^*)^d)^{+\{1,2\}} = (\alpha^{+\{1,2\}})^{*3}$\\
        $(\alpha^*)^{+\{1,2\}} = (\alpha^{+\{1,2\}})^{*1}$
    \end{tabularx}
   \end{lemmaE}
   \begin{proofE}
    Proof for (1):

    The claim is proven via a structural induction over the complexity of a formula.
    The complexity is defined via a rank function, where higher rank means higher complexity:
    \begin{itemize}
        \item $\text{rank}(p(x)) = 0$ for some predicate symbol $p$
        \item $\text{rank}(e \geq \tilde{e}) = 0$
        \item $\text{rank}(\lnot P) = \text{rank}(P) + 1$
        \item $\text{rank}(P \land Q) = max(\text{rank}(P), \text{rank}(Q)) +1$
        \item $\text{rank}(\forall x P) = \text{rank}(P)+1$
        \item $\text{rank}(\exists x P) = \text{rank}(P)+1$
        \item $\text{rank}(\dia{C}{\pi}) = 3|C| + \text{rank}(\pi)$
        \item $\text{rank}(\bx{C}{\pi}) = 3|C| + \text{rank}(\pi) + 1$
        \item $\text{rank}(P^\co) = \text{rank}(P)+1$
        \item $\text{rank}(\pi(x)) = 0$ for some predicate symbol $\pi$.
        \item $\text{rank}(\pi_1 \land \pi_2) = max(\text{rank}(\pi_1), \text{rank}(\pi_2))+1$
        \item $\text{rank}(\lnot \pi) = \text{rank}(\pi) +1$
        \item $\text{rank}(\proj{P_1}{P_2}{P_3}) = max(\text{rank}(P_1), \text{rank}(P_2), \text{rank}(P_3)) +1$
        \item $\text{rank}(\mo{\alpha}{\pi}) = \text{rank}(\alpha) + \text{rank}(\pi)$
        \item $\text{rank}(x:=e) = 1$
        \item $\text{rank}(?_iQ)= 1$
        \item $\text{rank}(\{x'=f(x)\& Q\}_i) = 1$
        \item $\text{rank}(\alpha;\beta) = max(\text{rank}(\alpha), \text{rank}(\beta)) +3$
        \item $\text{rank}(\alpha\cup_i\beta) = max(\text{rank}(\alpha), \text{rank}(\beta))+3$
        \item $\text{rank}(\alpha^{*i}) = \text{rank}(\alpha) +1$
    \end{itemize}
    With this now defined, the structural induction can begin:
    \begin{enumerate}
        \item The claim can be proven straightforwardly for formulas of the form $e\geq \tilde{e}$, $\lnot P$, $P \land Q$, $\forall x P$ and $\exists x P$ by using the definition of the functions $^\flat$ and $^\sharp$ and the IH.
        
        \item $(\dia{C}{\pi})^{\flat\sharp}$ with $|C| > 1$ is $\bigvee_{c \in C} (\dia{\{c\}}{\pi})^{\flat\sharp}$.
        $\dia{C}{\pi}$ can be transformed to $\bigvee_{c \in C} \dia{\{c\}}{\pi}$ with axiom $\aspl$.
        This is equivalent to  $\bigvee_{c \in C} (\dia{\{c\}}{\pi})^{\flat\sharp}$ by IH.

        The proof works similarly for $\bx{C}{\pi}$ for $|C|\geq1$ using axioms $\asplb$ and then $\sdb$ to transform all boxes into diamonds.
        As the rank has be chosen as such that boxes have a higher complexity than diamonds, it also works, if $C$ only contains one coalition, so there is no need to do case distinctions for box modalities.

        \item $\dia{\{c\}}{\lnot \pi}$ can be transformed to $\lnot \dia{\{c\}}{\pi}$ with the axioms $\db$ and $\sdb$.
        $(\dia{\{c\}}{\lnot \pi})^{\flat\sharp}$ evaluates to $\lnot (\dia{\{c\}}{\pi})^{\flat\sharp}$.
        The equivalence can now easily be proven by IH.

        \item $\dia{\{c\}}{\pi_1 \land p_2}$ can be transformed to $\dia{\{c\}}{\pi_1} \land \dia{\{c\}}{\pi_2}$ using $\sdb$, $\bland$, $\sdb$.
        $(\dia{\{c\}}{\pi_1 \land p_2})^{\flat\sharp}$ evaluates to $(\dia{\{c\}}{\pi_1})^{\flat\sharp} \land (\dia{\{c\}}{\pi_2})^{\flat\sharp}$ which is equivalent to this by IH.

        \item $\dia{\{c\}}{P^\co}$ can be transformed to $P$ using $\dco$.
        $(\dia{\{c\}}{P^\co})^{\flat\sharp}$ evaluates to $P^{\flat\sharp}$ which is equivalent to this by IH.

        \item $\dia{\{c\}}{\proj{P_1}{P_2}{P_3}}$ is equivalent to $\bigwedge_{i\in c}P_i$ by $\xir$.
        $(\dia{\{c\}}{\proj{P_1}{P_2}{P_3}})^{\flat\sharp}$ evaluates to $\bigwedge_{i\in c}P_i^{\flat\sharp}$ which is equivalent to this by IH.

        \item $\dia{\{c\}}{\mo{x:=e}{\pi(x)}}$ can be transformed to $\dia{\{c\}}{\pi(e)}$ by $\sas$.
        $(\dia{\{c\}}{\mo{x:=e}{\pi(x)}})^{\flat\sharp}$ evaluates to $\dia{\{\{1,2\}\}}{\mo{x:=e}{\proj{(\dia{\{c\}}{\pi(x)})^{\flat\sharp}}{(\dia{\{c\}}{\pi(x)})^{\flat\sharp}}{\lnot (\dia{\{c\}}{\pi(x)})^{\flat\sharp}}}}$.
        Using $\sas$, $\xir$ on this, yields $(\dia{\{c\}}{\pi(e)})^{\flat\sharp}$ which is equivalent to $\dia{\{c\}}{\pi(e)}$ by IH.

        \item $\dia{\{c\}}{\mo{?_iQ}{\pi}}$ is equivalent to $Q \land \dia{\{c\}}{\pi}$ by $\ste$, $\sdb$, $\bland$, $\sdb$ and $\dco$, if $i \in c$.
        In this case, $(\dia{\{c\}}{\mo{?_iQ}{\pi}})^{\flat\sharp}$ evaluates to $\dia{\{\{1,2\}\}}{\mo{?_1 Q^{\flat\sharp}}{\proj{(\dia{\{c\}}{\pi})^{\flat\sharp}}{(\dia{\{c\}}{\pi})^{\flat\sharp}}{\lnot(\dia{\{c\}}{\pi})^{\flat\sharp}}}}$.
        Using $\sas$, $\sdb$, $\bland$, $\sdb$, $\xir$ and $\dco$ this can be broken down to $Q^{\flat\sharp} \land \dia{\{c\}}{\pi}^{\flat\sharp}$.
        This is equivalent to $Q \land \dia{\{c\}}{\pi}$ by IH.

        In the case $i \not \in c$, $\dia{\{c\}}{\mo{?_iQ}{\pi}}$ can similarly be broken down to $\lnot Q \lor \dia{\{c\}}{\pi}$ using $\sten$ while $(\dia{\{c\}}{\mo{?_iQ}{\pi}})^{\flat\sharp}$ evaluates to $\lnot Q^{\flat\sharp} \lor (\dia{\{c\}}{\pi})^{\flat\sharp}$.
        These formulas are equivalent by IH.

        \item Similarly to the two cases above, the case $\dia{\{c\}}{\mo{\{x'=f(x)\}_i}{\pi}}$ can be proven by breaking the formulas and the version transformed with $^{\flat\sharp}$ down as much as possible.
        Then, the claim can be proven by IH.

        \item $\dia{\{c\}}{\mo{\alpha;\beta}{\pi}}$ can be broken down to $\dia{\{c\}}{\mo{\alpha}{\mo{\beta}{\pi}}}$ using $\sseq$.
        $(\dia{\{c\}}{\mo{\alpha;\beta}{\pi}})^{\flat\sharp}$ evaluates to $\dia{\{(\alpha\dc)^{+\{1,2\}}; (\beta\dc)^{+\{1,2\}}\}}{\proj{(\dia{\{c\}}{\pi})^{\flat\sharp}}{(\dia{\{c\}}{\pi})^{\flat\sharp}}{\lnot (\dia{\{c\}}{\pi})^{\flat\sharp}}}$.
        This can also be broken down using $\sseq$.
        The result \[\dia{\{\{1,2\}\}}{\mo{(\alpha)^{+\{1,2\}}}{\mo{(\beta)^{+\{1,2\}}}{\proj{(\dia{\{c\}}{\pi})^{\flat\sharp}}{(\dia{\{c\}}{\pi})^{\flat\sharp}}{\lnot (\dia{\{c\}}{\pi})^{\flat\sharp}}}}}\] is equivalent $\dia{\{\{1,2\}\}}{\mo{(\alpha\dc)^{+\{1,2\}}}{\proj{\psi}{\psi}{\lnot\psi}}}$ where $\psi$ is an abbreviation for $\dia{\{\{1,2\}\}}{\mo{(\beta)^{+\{1,2\}}}{\proj{(\dia{\{c\}}{\pi})^{\flat\sharp}}{(\dia{\{c\}}{\pi})^{\flat\sharp}}{\lnot (\dia{\{c\}}{\pi})^{\flat\sharp}}}}$.
        Now the definitions of $^\flat$ and $^\sharp$ can be used backwards to obtain $(\dia{c}{\mo{\alpha}{\mo{\beta}{\pi}}})^{\flat\sharp}$.
        By IH, this is equivalent to $\dia{c}{\mo{\alpha}{\mo{\beta}{\pi}}}$.

        \item To prove the case $\dia{\{c\}}{\mo{\alpha\cup_i \beta}{\pi}}$, both sides of the equivalence are broken down as far as possible using $\sch$ or $\schn$ respectively.
        For the resulting parts, the definitions of $^\flat$ and $^\sharp$ are used backwards.
        After that, the proof can be closed by IH.

        \item For $\dia{\{c\}}{\mo{\alpha^{*i}}{\pi}}$, both directions of the equivalence are proven separately. 
        To prove $(\dia{\{c\}}{\mo{\alpha^{*i}}{\pi}})^{\flat\sharp} \rightarrow \dia{\{c\}}{\mo{\alpha^{*i}}{\pi}}$, the definition of $^{\flat\sharp}$ is expanded first to $\dia{\{\{1,2\}\}}{\mo{((\alpha\dc)^{+\{1,2\}})^*1}{\proj{(\dia{\{c\}}{\pi})^{\flat\sharp}}{(\dia{\{c\}}{\pi})^{\flat\sharp}}{\lnot (\dia{\{c\}}{\pi})^{\flat\sharp}}}}$ in the case $i \in c$.
        Then, $\dia{\{c\}}{\mo{\alpha^{*i}}{\pi}}$ can be pulled inside the diamond modality of the left-hand side of the implication by turning it into a coalition formula with $^\co$.
        Now, the rule $\sFP$ can be used which yields $\dia{\{\{1,2\}\}}{\proj{(\dia{\{c\}}{\pi})^{\flat\sharp}}{(\dia{\{c\}}{\pi})^{\flat\sharp}}{\lnot (\dia{\{c\}}{\pi})^{\flat\sharp}} \lor \mo{(\alpha\dc)^{+\{1,2\}}}{(\dia{\{c\}}{\mo{\alpha^{*i}}{\pi}})^\co} \rightarrow (\dia{\{c\}}{\mo{\alpha^{*i}}{\pi}})^\co}$. 
        Splitting the modality with $\dlor$ and simplifying yields $(\dia{\{c\}}{\pi})^{\flat\sharp} \lor \dia{\{\{1,2\}\}}{\mo{(\alpha\dc)^{+\{1,2\}}}{(\dia{\{c\}}{\mo{\alpha^{*i}}{\pi}})^\co}} \rightarrow \dia{\{c\}}{\mo{\alpha^{*i}}{\pi}}$.
        Then, axiom $\srep$ is used on the right-hand side of the implication, transforming it to $\dia{\{c\}}{\pi} \lor \dia{\{c\}}{\mo{\alpha}{\mo{\alpha^{*i}}{\pi}}}$. 
        By monotonicity, $\dia{\{c\}}{\mo{\alpha}{\mo{\alpha^{*i}}{\pi}}}$ is equivalent to $\dia{\{c\}}{\mo{\alpha}{(\dia{\{c\}}{\mo{\alpha^{*i}}{\pi}})^\co}}$.
        With $\forall$-generalisation and US, $\dia{\{c\}}{\mo{\alpha^{*i}}{\pi}}$ is replaced by predicate $p(x)$.
        Although the additional universal quantifier increases the rank by one, replacing the repetition game by a predicate symbol lowers the rank by at least 2, so the overall complexity of the formula decreases.
        Now, $\dia{\mo{(\alpha\dc)^{+\{1,2\}}}}{\pi(x)^\co}$ can be transformed backwards to $(\dia{\{c\}}{\mo{\alpha}{p(x)^\co}})^{\flat\sharp}$ similarly to the way it is done in the case of the sequential game, using the fact that $^\flat$ and $^\sharp$ do not change predicate symbols.
        This yields $\forall x (\dia{\{c\}}{\pi})^{\flat\sharp} \lor (\dia{\{c\}}{p(x)^\co})^{\flat\sharp} \rightarrow \dia{\{c\}}{\pi} \lor \dia{\{c\}}{\mo{\alpha}{p(x)^\co}}$.
        Now, the proof can be closed by IH.
        The other direction of the proof works similarly.

        In the case of $i\not \in c$, the direction $\dia{\{c\}}{\mo{\alpha^{*i}}{\pi}} \rightarrow (\dia{\{c\}}{\mo{\alpha^{*i}}{\pi}})^{\flat\sharp}$ is proved first.
        Expanding the definition of $^{\flat\sharp}$ and merging both sides of the application into a single diamond modality yields $\dia{\{\{1,2\}\}}{(\dia{\{c\}}{\mo{\alpha^{*i}}{\pi}})^\co} \rightarrow \mo{((\alpha\dc)^{+\{1,2\}})^{*1}}{\proj{(\dia{\{c\}}{\pi})^{\flat\sharp}}{(\dia{\{c\}}{\pi})^{\flat\sharp}}{\lnot(\dia{\{c\}}{\pi})^{\flat\sharp}}}$.
        To replace the right-hand side of the implication, $\dia{\{\{1,2\}\}}{\mo{((\alpha\dc)^{+\{1,2\}})^{*1}}{(\dia{\{c\}}{\mo{\alpha^{*1}}{\pi}})^\co} \rightarrow \mo{((\alpha\dc)^{+\{1,2\}})^{*1}}{\proj{(\dia{\{c\}}{\pi})^{\flat\sharp}}{(\dia{\{c\}}{\pi})^{\flat\sharp}}{\lnot(\dia{\{c\}}{\pi})^{\flat\sharp}}}}$ is cut in.
        This implication can be proven by $\sMon$, $\srepn$ and IH.
        With this replacement, rule $\sind$ is now applicable.
        Then, axiom $\sind$ is used on the left-hand side of the application.
        Now, the statement can be proven by using $\forall$-generalisation and replacing $\dia{\{c\}}{\mo{\alpha^{*i}}{\pi}}$ by $p(x)$ with US like in the previous case of $i\in c$.
        The other direction is proven similarly.
    \end{enumerate} 

    Proof for (2):

    The proof is conducted, using an induction over the structural complexity of the formulas.
    \begin{enumerate}
        \item The validity of the claim for $e\geq\tilde{e}$, $P \land Q$, $\exists x P$ and $\forall x P$ can be easily seen by applying the definition of $^{\sharp\flat}$ and then the IH.
        \item For the cases $\angel{\alpha}{P}$ and $\demon{\alpha}{P}$, notice first, that $(\alpha^{+\{1,2\}})^{-\{1,2\}}$ is just $\alpha$.
        This can be proven by a straightforward induction over the structure of $\alpha$.
        Consequently, $(\angel{\alpha}{P})^{\sharp\flat}$ expands to $\angel{\alpha}{P^{\sharp\flat}}$.
        Its equivalence to $\angel{\alpha}{P}$ can now be proven easily by monotonicity and IH.
        Similarly, $(\demon{\alpha}{P})^{\sharp\flat}$ expands to $\lnot \angel{\alpha}{\lnot P^{\sharp\flat}}$ which is equivalent to $\demon{\alpha}{P}$ by $\langle^d\rangle$, monotonicity and IH.
    \end{enumerate}
   \end{proofE}

   As the translation is proven to be correct, it can now be shown that \logic and \dGL prove the same formulas.
   \begin{lemmaE}[Equipotency]
    \label{equipotency}
    Formulas that can be proved in \dGL can also be proved in \logic and vice versa:
    \begin{enumerate}
        \item \dGL $\vdash P$ iff \logic $\vdash P^\sharp$ for any \dGL formula $P$
        \item \logic $\vdash P$ iff \dGL $\vdash P^\flat$ for any \logic formula $P$
    \end{enumerate}
   \end{lemmaE}
   \begin{proofE}
    Proof for (1):

    The proof for the direction \dGL $\vdash P$ implies \logic $\vdash P^\sharp$ is conducted by an induction over the structure of the proof.
    \begin{enumerate}
        \item If the proof for $P$ is closed using axiom $\langle := \rangle$, $\angel{x:=e}{p(x)} \leftrightarrow p(e)$, the same equivalence can be derived in \logic by applying $\sas$ and $\xir$ to $(\angel{x:=e}{p(x)})^\sharp \leftrightarrow \dia{\{\{1,2\}\}}{\mo{x:=e}{\proj{p(x)^\sharp}{p(x)^\sharp}{\lnot p(x)^\sharp}}}$.
        \item Equivalences for the axioms $\langle'\rangle$, $\langle ? \rangle$, $\langle \cup \rangle$, $\langle ; \rangle$ and $\langle^* \rangle$ can be derived similarly by using the correspondent axiom in \logic, then bringing the result in the right shape with $\xir$, $\dco$ and the definition of $^\sharp$.
        \item Because there is no uniform handling of the dual game, all cases for $\alpha$ in axiom $\langle^d \rangle$ need to be considered separately via an induction over the structure of the game.
        
        For the assignment game, the test game and the continuous game, the equivalence can be derived by breaking the game down with the respective axiom and inserting a  double negation.
        In the case of the test game and the continuous game, the axiom for the case $i\in c$ can be used to reassemble the game and proving the equivalence.
        For the assignment game, simply the same axiom is used.

        The proof works similarly for the choice game, but the IH needs to be used only once on the parts, in which the game is broken into.

        For the sequential game, $\sseq$ is used first to break it down.
        Then, the result $\dia{\{\{1,2\}\}}{\mo{(\alpha^d)^{+\{1,2\}}}{\mo{(\beta^d)^{\{1,2\}}}{\proj{P^\sharp}{P^\sharp}{\lnot P^\sharp}}}}$ is transformed to \[\begin{aligned}
        \dia{\{\{1,2\}\}}{\mo{(\alpha^d)^{\{1,2\}}}{\proj{\dia{\{\{1,2\}\}}{\mo{(\beta^d)^{\{1,2\}}}{\proj{P^\sharp}{P^\sharp}{\lnot P^\sharp}}}\\}{\dia{\{\{1,2\}\}}{\mo{(\beta^d)^{\{1,2\}}}{\proj{P^\sharp}{P^\sharp}{\lnot P^\sharp}}}}{\lnot \dia{\{\{1,2\}\}}{\mo{(\beta^d)^{\{1,2\}}}{\proj{P^\sharp}{P^\sharp}{\lnot P^\sharp}}}}}}
        \end{aligned}\]
        The equivalence of this transformation can easily be seen by monotonicity.
        Now, the IH can be used twice, and the result can be reassembled using $\sseq$ again, proving the claim.

        In the case of the repetition game, both directions of the equivalence are proved separately. 
        The implication \[\dia{\{\{1,2\}\}}{\mo{(\alpha^d)^{*3}}{\proj{P^\sharp}{P^\sharp}{\lnot P^\sharp}}} \rightarrow \lnot \dia{\{\{1,2\}\}}{\mo{(\alpha^d)^{*1}}{\proj{\lnot P^\sharp}{\lnot P^\sharp}{P^\sharp}}}\] needs to be inverted to \[\dia{\{\{1,2\}\}}{\mo{(\alpha^d)^{*1}}{\proj{\lnot P^\sharp}{\lnot P^\sharp}{P^\sharp}}} \rightarrow \lnot \dia{\{\{1,2\}\}}{\mo{(\alpha^d)^{*3}}{\proj{P^\sharp}{P^\sharp}{\lnot P^\sharp}}}\] first, then it can be proven using $\sFP$ and $\srepn$.
        The other direction can be proven using $\sind$ and $\srep$.

        \item The equivalent to rule $[\cdot]$ follows directly from the definition of the $^\flat$ operator.
        
        \item The equivalent for the monotonicity rule can be derived by using $\sMon$, and then simplifying the result with $\xir$.
        
        \item To derive the FP rule, the $Q^\sharp$ in $\dia{\{\{1,2\}\}}{\mo{(\alpha^{+\{1,2\}})}{\proj{P^\sharp}{P^\sharp}{\lnot P^\sharp}}} \rightarrow Q^\sharp$ is pulled into the diamond by rewriting it to $\proj{Q^\sharp}{Q^\sharp}{\lnot Q^\sharp}$ as $\dia{\{\{1,2\}\}}{\proj{Q^\sharp}{Q^\sharp}{\lnot Q^\sharp}}$ simplifies to just $Q^\sharp$.
        Now, the $\sFP$ rule is applicable.
        Splitting up the diamond modality and simplifying with $\xir$ then yields $(P \lor \angel{\alpha}{Q} \rightarrow Q)^\sharp$.

        \item First-order rules are included in both \dGL and \logic calculus, so if a derivation ends with a first-order rule, the same rule can be applied in \logic because the translation does not alter first-order connectives.
    \end{enumerate}

    Now, the other direction, \logic $\vdash P^\sharp$ implies \dGL $\vdash P$, is proved.
    As for the other direction before, this is done via a structural induction over the proof, considering the last rule used to close the proof.
    If an axiom or rule for $i \in c$ or monotonicity is used, an equivalent derivation can be achieved with the corresponding \dGL axiom.
    The axioms for $i \not \in c$ can be derived using axiom $\langle^d \rangle$.
    For axioms $\xir$, $\dco$ and $\dlor$ both sides are the same in \dGL.
    Therefore, the equivalent derivation in \dGL would be no transformation at all.
    $\spl$ will never be applied because the translation $^\sharp$ does not generate diamond modalities with more than one coalition.
    Same for rules $\db$ and $\sdb$ because the translation does not generate boxes.
    Like in the previous direction, first-order rules are included in both calculi, so the exact same rule can be used.

    Proof for (2):\\
    The proof is conducted in both directions via an induction over the structure of the proof.
    First, the direction \logic $\vdash P$ implies \dGL $\vdash P^\flat$:
    All rules and axioms for $i \in c$ and monotonicity have a directly corresponding axiom in \dGL, so this axiom can be used for an equivalent transformation.
    For derivations made with an axiom for $i \not \in c$, an equivalent derivation can be achieved in \dGL by additionally using the $\langle^d\rangle$ axiom.
    The equivalent of rules $\spl$, $\xir$, $\dco$, $\dlor$, $\db$ and $\sdb$ is just the identity because both sides are the same when translated to \dGL.

    Now, the direction \dGL $\vdash P^\flat$ implies \logic $\vdash P$ is proved.
    Rules $[\cdot]$ and $\langle^d\rangle$ will never be applied in the derivation because the translation $^\flat$ neither generates box modalities nor dual games.
    For all other rules and axioms the corresponding axiom in \logic can be applied to achieve an equivalent transformation.
   \end{proofE}

With this result relative completeness of \dGL directly transfers to \logic.

\begin{theorem}[Relative completeness]
    The proof calculus for \logic is complete relative to any differentially expressive logic $L$.   
\end{theorem}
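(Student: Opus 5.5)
Let $P$ be a valid \logic formula; I will show \logic $\vdash P$ from $L$ tautologies. The plan is to transfer the known relative completeness of \dGL \cite{Platzer15} along the translation ${}^\flat$, in three steps.

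\emph{Step 1: transfer validity.} By Lemma~\ref{dglequivalence}, $\win{P} = \win{P^\flat}$. Since $P$ is valid, $\win{P}=\s$, so the \dGL formula $P^\flat$ is valid as well. Here I take the translation of Lemma~\ref{dglequivalence} to coincide with the ${}^\flat$ of Lemma~\ref{provable_inverses}: the clauses are the same, the game part uses the same $\dc$ operator from Table~\ref{minuscdef}, and the box case reduces to diamonds via Lemma~\ref{winequiv}.

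\emph{Step 2: invoke \dGL relative completeness.} \dGL is complete relative to any differentially expressive logic, so \dGL $\vdash P^\flat$ from $L$ tautologies. I must check that $L$ is differentially expressive for \dGL in Platzer's sense, given that it is differentially expressive for \logic by the Definition above.
\begin{itemize}
\item \emph{Expressiveness.} Every \dGL formula $Q$ has an equivalent \logic formula $Q^\sharp$ by Lemma~\ref{provable_inverses}(2). By expressiveness for \logic, $Q^\sharp$ in turn has an $L$-equivalent.
\item \emph{Provable ODE equivalences.} The \dGL equivalences for $\angel{x'=f(x)}{Q}$ are obtained from the assumed \logic equivalences for $\dia{C}{\mo{\{x'=f(x)\}_i}{\pi}}$, instantiated at $C=\{\{1,2\}\}$ with $i=1$, or with $i=3$ for the dual. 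These are then pushed through Lemma~\ref{equipotency}(2) and through Lemma~\ref{provable_inverses}, using the fact that first-order $L$-formulas are fixed by ${}^\flat$ and ${}^\sharp$.
\end{itemize}
I expect this bookkeeping to be the main obstacle. The definition of differential expressiveness is phrased for \logic formulas, while \dGL's theorem needs it for \dGL formulas, so the translations must be shown to preserve provability of these specific equivalences. My first attempt would be to route everything through Lemma~\ref{equipotency}(1), then apply ${}^\sharp$, and then use Lemma~\ref{provable_inverses}(1).

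\emph{Step 3: transfer back.} By Lemma~\ref{equipotency}(2), \dGL $\vdash P^\flat$ implies \logic $\vdash P$. The $L$ tautologies used in the \dGL derivation are first-order over $L$, and ${}^\flat$ leaves first-order connectives unchanged, so they remain available as premises in \logic. Alternatively, Lemma~\ref{equipotency}(1) yields \logic $\vdash P^{\flat\sharp}$, and Lemma~\ref{provable_inverses}(1) gives \logic $\vdash P \leftrightarrow P^{\flat\sharp}$; modus ponens then gives \logic $\vdash P$. I would present this second route, since it uses only syntactic facts already established.
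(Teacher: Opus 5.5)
Your proposal is correct and follows the same route as the paper's proof. Validity of $P$ transfers to the \dGL formula $P^\flat$, \dGL's relative completeness gives \dGL $\vdash_L P^\flat$, and Lemma~\ref{equipotency} brings the derivation back to \logic. Three of your choices are refinements rather than a different approach: you cite Lemma~\ref{dglequivalence} for the validity transfer, which is more apt than the paper's citation of Lemma~\ref{provable_inverses}; you check explicitly that $L$ is differentially expressive for \dGL, which the paper silently assumes; and your preferred final step goes through Lemma~\ref{equipotency}(1) and Lemma~\ref{provable_inverses}(1) instead of Lemma~\ref{equipotency}(2).
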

\begin{proof}
   Relative completeness of \logic to any differentially expressive logic $L$ can be proved via the relative completeness of \dGL.
   Any valid formula $P$ can be proven in \dGL from $L$ tautologies, i.e. \dGL $\vdash_L P$ \cite[Th. 4.5]{Platzer15}.
   As anything that can be proven in \dGL, can also be proven in \logic (L. \ref{equipotency}), the translated formula $P^\sharp$ can also be proven in \logic, i.e. \logic $\vdash_L P^\sharp$.
   Additionally, if  $P$ is a valid formula in \logic, $P^\flat$ is also valid in \dGL because of the equivalence of the translation (L. \ref{provable_inverses}).
   As \dGL is relatively complete, $P^\flat$ can be proven from $L$ tautologies, i.e. \dGL $\vdash_L P^\flat$.
   Consequently, \logic $\vdash_L P$ also holds due to L. \ref{equipotency}.
\end{proof}

\section{Conclusion}    

This paper presented the logic \logic for reasoning about three-player non-zero-sum hybrid games.
\logic makes it possible to handle these with hybrid dynamics while allowing the formation of coalitions between players at will during game play without fixing them statically beforehand (which would remove coalition forming entirely).

A syntax, a semantics and a sound and relatively complete proof calculus for \logic are introduced in this paper.
Additionally, monotonicity of the game modality has been shown and an equivalent transformation to \dGL has been proved.
The latter result is then used to prove relative completeness for the proof calculus, thus, showing that the calculus is adequately defined.

In the future, \logic can be extended to $n$-player games, as those have the same essential complexity as three-player games.
This would allow for more comprehensive proofs of scenarios involving multiple agents.
Furthermore, it is interesting to implement \logic in a theorem prover and conduct case studies.

\bibliography{refs}{}
\section{Appendix}
\begin{figure}
    \begin{tabularx}{\textwidth}{>{\color{blue}}l X >{\color{blue}}l}
        $\splb$ & $\bx{C\cup \{c\}}{\pi} \leftrightarrow \bx{C}{\pi} \land \bx{\{c\}}{\pi}$\\
        $\aspl$ & $\dia{C}{\pi} \leftrightarrow \bigvee_{c\in C} \dia{\{c\}}{\pi}$\\
        $\asplb$ & $\bx{C}{\pi} \leftrightarrow \bigwedge_{c \in C} \bx{\{c\}}{\pi}$\\
        $\xirb$ & $\bx{C}{\proj{P_1}{P_2}{P_3}} \leftrightarrow \bigwedge_{c\in C} \bigwedge_{i \in c} P_i$\\
        $\dcob$ & $\bx{C}{P^\co} \leftrightarrow P$\\
        $\bland$ & $\bx{C}{\pi \land \psi} \leftrightarrow \bx{C}{\pi} \land \bx{C}{\psi}$\\
        $\Mon$ & \AxiomC{$\dia{C}{\pi \rightarrow \psi}$} \UnaryInfC{$\dia{C}{\mo{\alpha}{\pi} \rightarrow \mo{\alpha}{\psi}}$} \DisplayProof & \\[10pt]
        $\as$ & $\dia{C}{\mo{x:=e}{\pi(x)}} \leftrightarrow \dia{C}{\pi(e)}$ & \\
        $\te$ & $\dia{C}{\mo{?_i Q}{\pi}} \leftrightarrow \dia{C}{Q^\co \land \pi}$ & $C \subseteq C_i$ \\
        $\ten$ & $\dia{C}{\mo{?_i Q}{\pi}} \leftrightarrow \dia{C}{\lnot Q^\co \lor \pi}$ & $C \subseteq C_i^\C$\\
        $\con$ & $\dia{C}{\mo{\{x'=f(x)\}_i}{\pi}} \leftrightarrow \exists t {\geq} 0: \dia{C}{\mo{x:=y(t)}{\pi}}$ & $C \subseteq C_i$, $y'=f$ \\
        $\conn$ & $\dia{C}{\mo{\{x'=f(x)\}_i}{\pi}} \leftrightarrow \forall t {\geq} 0:\dia{C}{\mo{x:=y(t)}{\pi}}$ & $C \subseteq C_i^\C$, $y'=f$ \\
        $\ch$ & $\dia{C}{\mo{\alpha \cup_i \beta}{\pi}} \leftrightarrow \dia{C}{\mo{\alpha}{\pi} \lor \mo{\beta}{\pi}}$ & $C \subseteq C_i$\\
        $\chn$ & $\dia{C}{\mo{\alpha \cup_i \beta}{\pi}} \leftrightarrow \dia{C}{\mo{\alpha}{\pi} \land \mo{\beta}{\pi}}$ & $C \subseteq C_i^\C$\\
        $\seq$ & $\dia{C}{\mo{\alpha; \beta}{\pi}} \leftrightarrow \dia{C}{\mo{\alpha}{\mo{\beta}{\pi}}}$ & \\
        $\rep$ & $\dia{C}{\mo{\alpha^{*i}}{\pi}} \leftrightarrow \dia{C}{\pi \lor \mo{\alpha}{\mo{\alpha^{*i}}{\pi}}}$ & $C \subseteq C_i$\\
        $\repn$ & $\dia{C}{\mo{\alpha^{*i}}{\pi}} \leftrightarrow \dia{C}{\pi \land \mo{\alpha}{\mo{\alpha^{*i}}{\pi}}}$ & $C \subseteq C_i^\C$\\[2pt]
        $\FP$ & \AxiomC{$\dia{C}{\pi \lor \mo{\alpha}{\psi} \rightarrow \psi}$}
        \UnaryInfC{$\dia{C}{\mo{\alpha^{*i}}{\pi} \rightarrow \psi}$}
        \DisplayProof & $C \subseteq C_i$\\[10pt]
        $\ind$ & \AxiomC{$\dia{C}{\pi \rightarrow \mo{\alpha}{\pi}}$}
        \UnaryInfC{$\dia{C}{\pi \rightarrow \mo{\alpha^{*i}}{\pi}}$}
        \DisplayProof & $C \subseteq C_i^\C$ \\
        $\asb$ & $\bx{C}{\mo{x:=e}{\pi(x)}} \leftrightarrow \bx{C}{\pi(e)}$ & \\
        $\tenb$ & $\bx{C}{\mo{?_i Q}{\pi}} \leftrightarrow \bx{C}{Q \land P}$ & $C \subseteq C_i$ \\
        $\tenb$ & $\bx{C}{\mo{?_i Q}{\pi}} \leftrightarrow \bx{C}{\lnot Q \lor \pi}$ & $C \subseteq C_i^\C$\\
        $\conb$ & $\bx{C}{\mo{\{x'=f(x)\}_i}{\pi}} \leftrightarrow \exists t {\geq} 0: \bx{C}{\mo{x:=y(t)}{\pi}}$ & $C \subseteq C_i$, $y'=f$ \\
        $\connb$ & $\bx{C}{\mo{\{x'=f(x)\}_i}{\pi}} \leftrightarrow \forall t {\geq} 0:\bx{C}{\mo{x:=y(t)}{\pi}}$ & $C \subseteq C_i^\C$, $y'=f$  \\
        $\chb$ & $\bx{C}{\mo{\alpha \cup_i \beta}{\pi}} \leftrightarrow \bx{C}{\mo{\alpha}{\pi} \lor \mo{\beta}{\pi}}$ & $C \subseteq C_i$\\
        $\chnb$ & $\bx{C}{\mo{\alpha \cup_i \beta}{\pi}} \leftrightarrow \bx{C}{\mo{\alpha}{\pi} \land \mo{\beta}{\pi}}$ & $C \subseteq C_i^\C$\\
        $\seqb$ & $\bx{C}{\mo{\alpha; \beta}{\pi}} \leftrightarrow \bx{C}{\mo{\alpha}{\mo{\beta}{\pi}}}$ & \\
        $\repb$ & $\bx{C}{\mo{\alpha^{*i}}{\pi}} \leftrightarrow \bx{C}{\pi \lor \mo{\alpha}{\mo{\alpha^{*i}}{\pi}}}$ & $C \subseteq C_i$\\
        $\repnb$ & $\bx{C}{\mo{\alpha^{*i}}{\pi}} \leftrightarrow \bx{C}{\pi \land \mo{\alpha}{\mo{\alpha^{*i}}{\pi}}}$ & $C \subseteq C_i^\C$\\[2pt]
        $\FPb$ & \AxiomC{$\bx{C}{\pi \lor \mo{\alpha}{\psi} \rightarrow \psi}$}
        \UnaryInfC{$\bx{C}{\mo{\alpha^{*i}}{\pi} \rightarrow \psi}$}
        \DisplayProof & $C \subseteq C_i$\\[10pt]
        $\indb$ & \AxiomC{$\bx{C}{\pi \rightarrow \mo{\alpha}{\pi}}$}
        \UnaryInfC{$\bx{C}{\pi \rightarrow \mo{\alpha^{*i}}{\pi}}$}
        \DisplayProof & $C \subseteq C_i^\C$ \\
        $\Monb$ & \AxiomC{$\bx{C}{\pi \rightarrow \psi}$} \UnaryInfC{$\bx{C}{\mo{\alpha}{\pi} \rightarrow \mo{\alpha}{\psi}}$}\DisplayProof & \\
    \end{tabularx}
    \caption{Derived rules}
    \label{derivedRules}
\end{figure}
\printProofs
\end{document}